\documentclass[12pt,a4paper]{article}

\usepackage[utf8]{inputenc}

\usepackage[T1]{fontenc}   
\usepackage{lmodern}       
\usepackage{textcomp}      
  
  \usepackage{bm} 
  
\usepackage{placeins}
\usepackage{amssymb}
\usepackage{amsfonts,amsmath,amsfonts,graphicx,graphics,epsfig,subfig}

\usepackage{dsfont} 

\def\C{\bm{C}}

\DeclareMathOperator*{\argmin}{argmin}%
\newtheorem{theorem}{Theorem}

\newtheorem{rem}{Remark}
\newtheorem{proposition}{Proposition}
\newtheorem{corollary}{Corollary}

  \newtheorem{definition}{Definition}

\newtheorem{prop}{Proposition}
\newtheorem{proof}{Proof}
\newtheorem{thm}{Theorem}
\newtheorem{lem}[thm]{Lemma}

\def \m {\mathbf}
\def \R {\mathbb R}

\def \beg{\begin{eqnarray}}
\def \en{\end{eqnarray}}
\def \be*{\begin{eqnarray*}}
\def \e*{\end{eqnarray*}}
\def \di{\displaystyle}
\def \bit{\begin{itemize}}
\def \eit{\end{itemize}}
\def \E{\mathbb E}
\def \N{\mathbb N}
\def \I{\mathbb I}
\def\R{\mathbb{R}}
\def\L{\mathbb{L}}

\def\C{\bm{C}}

\def \si{\sigma}
\def \al {\alpha}
\def \V {\mathbb{V}}
\def \w{\widehat}
\def \td{\tilde}

\def\argmax{\mathop{\mbox{\sl\em argmax}}}
\renewcommand{\L}{\mathbb{L}}
\renewcommand{\P}{\mathbb{P}}
\renewcommand{\hat}{\widehat}
\renewcommand{\tilde}{\widetilde}
\renewcommand{\d}{\mbox{ \sl\em d}}
\newcommand{\ds}{\displaystyle}
\newcommand{\noi}{\noindent}
\renewcommand{\L}{\mathbb{L}}
\renewcommand{\P}{\mathbb{P}}
\def\argmax{\mathop{\mbox{\sl\em argmax}}}

\title{\huge Lancaster copulas}
\usepackage{authblk}

\author[1]{Angelo Efoévi Koudou}
\author[2]{Yves I. Ngounou Bakam}
\author[3]{Denys Pommeret}

  \affil[1]{Université de Lorraine, CNRS, IECL, F-54000 Nancy, France, E-mail: efoevi.koudou@univ-lorraine.fr}

  \affil[2]{De Vinci Higher Education, De Vinci Research Center, Paris, France, E-mail: yves.ngounou@devinci.fr}

  \affil[3]{I2M, CNRS, Aix Marseille University, Marseille, France,  E-mail: denys.pommeret@univ-amu.fr}

\usepackage[pdftex, colorlinks=true]{hyperref}
\hypersetup{
	bookmarks=true,         
	unicode=false,          
	pdftoolbar=true,        
	pdfmenubar=true,        
	pdffitwindow=false,     
	pdfstartview={FitH},    
	pdftitle={My title},    
	pdfauthor={Author},     
	pdfsubject={Subject},   
	pdfcreator={Creator},   
	pdfproducer={Producer}, 
	pdfkeywords={keywords}, 
	pdfnewwindow=true,      
	colorlinks=true,	    
	linkcolor=blue,        
	citecolor=blue,        
	filecolor=blue,        
	urlcolor=blue          
}

\begin{document}
\date{}
\maketitle

\begin{abstract}

We introduce a new copula class, called Lancaster copulas, built from orthogonal expansions of continuous Lancaster probabilities. We derive infinite-series representations for the copula and its density, study truncation effects, and show in numerical experiments that low-order truncations already provide accurate approximation.\\

\noindent
\textbf{Keywords:}\textit{ Approximation, copula density, Lancaster distributions, Orthogonal polynomials, Tail dependence.} 

\end{abstract}

\newpage
\section{Introduction}\label{sec1_chap5}

Consider a random vector $(X,Y)$ with joint distribution function $F_{X,Y}$ and marginal distribution functions $F_X, F_Y$, that we assume to be continuous.
 According to Sklar \cite{Sklar}, there exists a unique bivariate
 function $C$ such that
 \beg
 F_{X,Y}(x,y) &=& C(F_X(x),F_Y(y)).
 \en
  The function $C$ is called the copula (function) associated with  $(X,Y)$. The copula is a joint  distribution
 function on $[0, 1 ]^2$ with uniform margins and satisfying
$C(u, v) = F_{X,Y}(F_X^{-1} (u), F_Y^{-1} (v))$, where $F^{-1}_X(u) = \inf\{x;  F_X(x) \geq  u\}$,
is the quantile function of $X$.
Assuming that  $F_{X}$ and $F_Y$ are differentiable, we can express the joint density $f_{X,Y}$ of $(X,Y)$ as
 \beg
\label{cop1_chap5}
f_{X,Y}(x,y)  = c\left(F_{X}(x),F_{Y}(y)\right)f_{X}(x)f_Y(y),
\en
where  $f_{X}$ (resp. $f_Y$) is the marginal density of ${X}$ (resp. $Y$) and where
\beg
\label{density_chap5}
c(u,v) & =  & \frac{\partial^{2} C(u,v)}{\partial u  \partial v }
\en
is called the copula density of $(X,Y)$.
Copulas are widely applied in statistics and related fields (e.g.,  \cite{Joe2014, Nelsen2007}), and their use has recently extended to applied domains such as machine learning 
(see \cite{Ren2022, Hernandez2024}). 
%
Several copula families or classes are based on  Sklar's Theorem  and have been derived from certain families of multivariate distributions. 
Various constructions have been proposed, such as, for instance, Archimedean copulas \cite{Hofert2013}, Vines see \cite{Aas2009}, Elliptical copulas \cite{Frahm2003} which are  the distribution functions of componentwise transformed elliptically distributed random vectors, or the Extreme-value copulas  \cite{Gudendorf2010} which are the copulas of random vectors following multivariate extreme value distributions. A review on copulas dependence properties can be found in Ansari and Rockel \cite{Ansari2024}. Extensions of copulas through mixtures or transformations offer a wide range of modeling possibilities (see for instance \cite{Saminger2024, He2024}). Let us also mention Quessy \cite{Quessy2024}, who studies copulas resulting from nonmonotone transformations. And new constructions of copulas appear in recent literature, as in Guzmics and Pflug \cite{Guzmics2020} with multivariate generalizations of the exponential distribution, or in Pfeifer et al.\cite{Pfeifer2019}) with copulas that take into account tail  dependence and asymmetry. 

In this paper we propose a novel copula class called {\it Lancaster copulas}. This class is based on distributions satisfying a bi-orthogonality condition involving orthogonal polynomials with respect to their
marginal distributions. In other words, we consider the copulas of random vectors whose joint distributions belong to the class of Lancaster distributions. 
%
The reader is referred to Lancaster \cite{Lancaster1958,Lancaster1963,Lancaster1975}, the synthesis
in   \cite{Koudou1995,Koudou1996} and the characterization by \cite{Pommeret2004, Koudou2000} for more details.
To recapitulate this class of Lancaster distributions, we follow the standard
notation used in the literature, as given by these authors.

Let $(X,Y)$ be a random vector with  a bivariate distribution $\sigma$ and  margins $\mu$ and $\nu$ defined on  $\mathcal{X}$ and $\mathcal{Y}$, respectively. 
Assume that there exists orthonormal polynomials, $(P_n)_{n\in \N}$  (resp. $(Q_n)_{n\in \N}$), with respect to $\mu$ (resp. $\nu$); that is,    for $n,k\geq 0$
\begin{align*} 
\di \int_{\mathcal{X}} P_n(x)P_k(x)\mu(dx)  =  \di \int_{\mathcal{Y}} Q_n(x)Q_k(x)\nu(dx)  & =  \delta_{n,k},  
\end{align*}
where $\delta_{n,k}=1$ if $n = k$, and 0 otherwise. 
Then the  bivariate distribution $\sigma$ is said to be a Lancaster probability if it satisfies the following bi-orthogonality property 
\begin{align}
\E\left(P_n(X)Q_k(Y)\right)&= \rho_n\delta_{n,k}, 
\label{egaliteLancaster3}
\end{align}
 and the sequence
$\rho_n =\E\left(P_n(X)P_n(Y)\right)$ is called the Lancaster sequence of $\sigma$.  
It is assumed that 

\be*
    {\rm (A1)} &\hspace*{2cm}& \sum_{n \in \N}\rho_{n}^2 < \infty.\hspace*{11cm}  
\e* 
Then the series
$\sum_{n \in \N}\rho_{n}P_{n}Q_{n}$ converges in 
$ L^2(\mu \otimes \nu)$
and we have 
\beg
\label{expansion}
\di \sigma(dx,dy)=\sum_{n \in \N}\rho_{n}P_{n}(x)Q_{n}(y)\mu(dx)\nu(dy). 
\en

Lancaster  distributions are widely used in various
areas of probability and statistics. For instance, they appear in the literature concerning stationary Markov
processes (see \cite{Bussgang1952},
\cite{Barrett1955}, \cite{Brown1958},
\cite{Wong1962}, \cite{Wong1964}) and also in canonical analysis (see
\cite{Dauxois1975}, \cite{Buja1990}).  More recently Cuadras \cite{Cuadras05} shown that  the connection between Lancaster distributions and continuous canonical correlation analysis
(CCA) can be understood through their common spectral decomposition of dependence.
Lancaster  distributions have also been applied in disjunctive kriging within geostatistics 
(see  \cite{Droesbeke2002} and  references 
therein). There are also a few more recent references:
Griffith \cite{Griffiths2009} on reversible Markov
processes whose eigenfunctions are orthogonal
polynomials, Diaconis et al. \cite{Diaconis2008}  where new techniques
are proposed to calculate convergence rates to stationarity  in the context of bivariate Gibbs sampling and
where it is shown that these techniques yield accurate results  
if the target measurements are Lancaster distributions. 
In a related direction, Mena and Palma \cite{Mena2020} developed a framework linking Lancaster probabilities with reversible continuous-time Markov processes, by exploiting orthogonal polynomial eigenfunctions.
In the area of discrete probabilities,
Diaconis and Griffiths \cite{Diaconis2012} provides an interpretation
of Lancaster probabilities of binomial margins using the
generalized Ehrenfest ballot box model and
Griffiths and Spanò \cite{Griffiths2013} for Dirichlet measurements. We can also mention the problem of calculating distance correlation coefficients between random vectors whose joint distributions belong to the class of Lancaster distributions which has been studied in Dueck et al. \cite{Dueck}. 

Our construction of copulas is based on relation (\ref{expansion}). We obtain a new class of copulas that we call {\it Lancaster copulas}. They are expressed as infinite series in the orthogonal bases that characterize the Lancaster distributions. 
We approximate copulas and copula densities by truncation and show in numerical experiments that low-order truncations already provide accurate results. 
A standardized Lancaster copula is defined by transforming the original random variables into uniform random variables.    Through the property of copula invariance, we also obtain a generalization of continuous Lancaster distributions, replacing the sequences of bi-orthogonal polynomials with series of bi-orthogonal functions.
We  analyze tail behavior and show that Lancaster copulas are asymptotically independent (i.e., they have no tail dependence). Expressions for Spearman's and Kendall's coefficients associated with Lancaster copulas are also obtained.

The  paper is organized as follows: 
Section \ref{sec2_chap5} introduces the Lancaster copulas
and their densities, and discusses their truncated representations. Their transformations are studied and a generalization of the Lancaster probabilities is deduced. 
Some examples  from  \cite{Koudou1996, Goffard2017, Dueck}  are presented in Section \ref{simulation} in the context of Lancaster copulas  where we evaluate the accuracy of their approximations. Special attention is given to the construction of Lancaster copulas from the class of quadratic natural exponential families. 
 In Section \ref{sectail} we proceed to the study of the tail dependence. We also give  expressions for  Spearman's rho and  Kendall's tau associated with  Lancaster copulas. 
An extension to the multivariate case is derived  in Section \ref{secmulti} and a discussion concludes the paper. 

\section{Lancaster copulas and their transformations}\label{sec2_chap5}

\subsection{Assumptions and definition}

Let $\di \mu$ and $\di \nu$ be two probability measures on $\R$
such that there exists an open interval $\Theta$ satisfying  $\di \int e^{\theta x}\mu(dx)< \infty$ and
$\di \int e^{\theta y}\nu(dy)< \infty$, for all $\theta \in \Theta$, respectively (this condition implies the existence of moments of any order for $\mu$ and $\nu$ and that the measures are characterized by their moments). 
Assume that  $(X,Y)$ is a  random vector with margins $\mu$ and $\nu$ and  joint Lancaster distribution $\si$ 
satisfying  
decomposition (\ref{expansion}).   
We assume that $\mu$, $\nu$, and $\sigma$ have densities $f_X$, $f_Y$, and $f_{X,Y}$  with respect to the Lebesgue measure on their supports.
We also make the following assumptions:
\vspace{0.2cm}
\begin{itemize}
    \item[(A2)] \, $f_X$ and $f_Y$ are continuous and strictly positive on their supports ${\cal X}$ and ${\cal Y}$. 
\eit    
\vspace{0.2cm}
        As a consequence of (A2), the quantile maps $F_X^{-1}$ and $F_Y^{-1}$ are continuous on $(0,1)$ and differentiable almost everywhere with derivative $(F_X^{-1})^\prime (u)=1/f_X(F_X^{-1}(u))$.

\begin{prop} \label{prop1}
Assume that $(X,Y)$ is a vector with  margins $\mu$ and $\nu$ and joint Lancaster distribution $\sigma$, satisfying assumptions 
(A1)-(A2). 
Then the associated copula $C$ and copula density $c$ have the following expressions:
\beg
\label{cop_dens_lancaster}
\di c(u,v)=\sum_{n \in \N}\rho_{n}P_{n}(F_{X}^{-1}(u))Q_{n}(F_{Y}^{-1}(v)),
\en
\beg
\label{cop_lancaster}
\di C(u,v)=\sum_{n \in \N}\rho_{n}\int_{-\infty}^{F_{X}^{-1}(u)}P_{n}(x)f_X(x)dx  \int_{-\infty}^{F_{Y}^{-1}(v)}Q_{n}(y)f_Y(y)dy.
\en
\end{prop}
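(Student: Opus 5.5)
The plan is to start from the $L^2(\mu\otimes\nu)$ expansion (\ref{expansion}) of the joint density and transport it to the unit square through the quantile maps. By (\ref{expansion}), the density of $\sigma$ with respect to $\mu\otimes\nu$ is
\[
g(x,y)=\frac{f_{X,Y}(x,y)}{f_X(x)f_Y(y)}=\sum_{n\in\N}\rho_nP_n(x)Q_n(y),
\]
with convergence in $L^2(\mu\otimes\nu)$. Assumption (A1) together with orthonormality guarantees this convergence.

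Comparing with (\ref{cop1_chap5}), we have $c(F_X(x),F_Y(y))=g(x,y)$ for $(x,y)\in\mathcal X\times\mathcal Y$. By (A2), $F_X$ is a continuous strictly increasing bijection from $\mathcal X$ onto $(0,1)$ with inverse $F_X^{-1}$, and similarly for $F_Y$. Setting $x=F_X^{-1}(u)$ and $y=F_Y^{-1}(v)$ then gives (\ref{cop_dens_lancaster}).

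The point to make precise is the sense of the series. The map $\Phi:h\mapsto h\circ(F_X^{-1},F_Y^{-1})$ is an isometry from $L^2(\mu\otimes\nu)$ onto $L^2([0,1]^2,du\,dv)$. This holds because, if $U,V$ are independent uniforms, then $(F_X^{-1}(U),F_Y^{-1}(V))$ has law $\mu\otimes\nu$. Hence the partial sums $\sum_{n\le N}\rho_nP_n(F_X^{-1}(u))Q_n(F_Y^{-1}(v))$ converge to $c$ in $L^2([0,1]^2)$, and (\ref{cop_dens_lancaster}) holds in this $L^2$ sense, that is, almost everywhere along a subsequence. We also note that $c\in L^2$, since $\|c\|_2^2=\sum\rho_n^2<\infty$.

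For (\ref{cop_lancaster}), we write $C(u,v)=\int_0^u\int_0^v c(s,t)\,ds\,dt$, which is valid because $C$ is absolutely continuous with density $c$. Integrating an $L^2$-convergent series against the indicator $\mathds 1_{[0,u]\times[0,v]}\in L^2([0,1]^2)$ is a continuous linear functional, so by Cauchy--Schwarz the integral commutes with the sum. This gives
\[
C(u,v)=\sum_{n}\rho_n\int_0^uP_n(F_X^{-1}(s))\,ds\int_0^vQ_n(F_Y^{-1}(t))\,dt,
\]
and the resulting series converges pointwise for every $(u,v)$.

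Finally, in each factor we substitute $s=F_X(x)$, so that $ds=f_X(x)dx$. This is legitimate since $F_X^{-1}$ is continuous, increasing and a.e. differentiable with $(F_X^{-1})'(s)=1/f_X(F_X^{-1}(s))$, and $P_n\in L^2(\mu)\subset L^1(\mu)$. The factor becomes $\int_{-\infty}^{F_X^{-1}(u)}P_n(x)f_X(x)\,dx$, with $f_X$ extended by $0$ off $\mathcal X$, and the same substitution handles the $Q_n$ factor. This yields (\ref{cop_lancaster}).

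The main obstacle is the interchange of the sum with evaluation and integration. The expansion is only an $L^2$ series, so pointwise equality in (\ref{cop_dens_lancaster}) must be read almost everywhere, or in $L^2$. The isometry argument is the step to state carefully, while the interchange for $C$ is clean by continuity of the inner product.
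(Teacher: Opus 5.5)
Your proof is correct and is essentially the paper's argument. The formula for $c$ comes from identifying $f_{X,Y}(x,y)=f_X(x)f_Y(y)\,c(F_X(x),F_Y(y))$ with (\ref{expansion}), and the formula for $C$ from interchanging sum and integral by continuity of the $L^2$ inner product against an indicator; your isometry $\Phi$ has the added merit of making explicit that (\ref{cop_dens_lancaster}) holds in the $L^2$ (a.e.) sense, which the paper leaves implicit.

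The only difference is that you interchange in $L^2([0,1]^2)$ and then substitute back, whereas the paper pairs $\sum_n\rho_nP_nQ_n$ directly with the indicator of $\{x\le F_X^{-1}(u),\ y\le F_Y^{-1}(v)\}$ in $L^2(\mu\otimes\nu)$, so it needs no substitution. For your substitution step, a.e.\ differentiability of $F_X^{-1}$ alone does not license a change of variables. The clean justification is $F_X^{-1}(U)\sim\mu$ together with $\{U\le u\}=\{F_X^{-1}(U)\le F_X^{-1}(u)\}$.
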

\begin{proof} 
From (A2), $f_X$ and $f_Y$ are strictly positive, which implies that the density copula $c$ exists. By construction, $c$ satisfies   
\be* 
f_{X,Y}(x,y) & = & 
f_X(x)f_Y(y) c(F_X(x),F_Y(y))
\e*
that we identify with (\ref{expansion}) to obtain  (\ref{cop_dens_lancaster}).

To prove (\ref {cop_lancaster})  we observe that
\be*
C(u, v)
& = & 
 F_{X,Y}(F_X^{-1} (u), F_Y^{-1} (v))\\
& = &
P(X\leq F_X^{-1} (u), Y \leq F_Y^{-1} (v))\\
& = &
\int_{\R ^2} 1_{(-\infty, F_X^{-1} (u))}(x)
1_{(-\infty, F_Y^{-1} (v))}(y) \, \sigma(dx, dy)\\
& = &
\int_{\R ^2} 1_{(-\infty, F_X^{-1} (u))}(x)
1_{(-\infty, F_Y^{-1} (v))}(y) \, (\sum_{n\in \N}\rho_n P_n(x)Q_n(y)) \mu(dx) \nu(dy) \\
& = &
\langle    g, \lim_{N\to \infty} h_N   \rangle_{L^2(\mu \otimes \nu)},
\e*
where $\langle  ,  \rangle_{L^2(\mu \otimes \nu)}$ is the inner product in the Hilbert space $L^2(\mu \otimes \nu)$ and 
$$g(x,y)= 1_{(-\infty, F_X^{-1} (u))}(x) \, 1_{(-\infty, F_Y^{-1} (v))}(y)$$
and, for $N \in \N$ and $(x,y) \in \R ^2$,
$$h_N(x,y)=\sum_{n=0}^N \rho_n P_n(x)Q_n(y).$$
By continuity of the inner product in $L^2(\mu \otimes \nu)$ we have 
$$C(u, v)=  \lim_{N\to \infty} \langle    g,  h_N   \rangle_{L^2(\mu \otimes \nu)}$$
 and we get (\ref {cop_lancaster}).
\end{proof}

\begin{definition} 
We call a function $C$ defined on $[0,1]^2$ a Lancaster copula if it is the copula of a random vector $(X,Y)$ whose  joint  distribution $\sigma(dx,dy)$ is a Lancaster distribution with margins $\mu$ and $\nu$ satisfying assumptions (A1)-(A2).
\end{definition}


It is worth pointing out that (\ref{cop_dens_lancaster}) has similarities  with a diagonal spectral decomposition as studied in Cuadras \cite{Cuadras05} where 
the coefficients $\rho_n$ play the role
of canonical correlations, measuring dependence along orthogonal modes, while truncation
corresponds to restricting the analysis to a finite number of dominant canonical components. 

Other related references include Longla and \cite{Longla24} and Muia and Longla \cite{Muia25} who studied  symmetric copula densities having similar form  to  (\ref{cop_dens_lancaster}). But such a symmetry in $u$ an d$v$ necessitates that $X$ and $Y$ have the same distribution.

 
\begin{rem}
Discrete Lancaster distributions do exist (see, for instance, Koudou \cite{Koudou1996}), with a representation similar to (\ref{expansion}). 
%
However, no density copula is associated with such a representation  and we therefore restrict our analysis in the present paper to the continuous case.
\end{rem}

\subsection{Standardized Lancaster copula}
Let $(X,Y)$ be a  random vector with Lancaster copula $C$. 
It is well known that under  increasing measurable functions $G$ and $H$, $(X,Y)$ and $(G(X),H(Y))$ share the same copula. 
In particular, if we consider $(\td X, \td Y) := (F_X(X),F_Y(Y))$, with joint 
distribution function  $\td F := F_{\td X,\td Y}$,  we know that 
$C(u,v) =  \td F(u,v)$. Since the margins of $(\td X, \td Y)$ are uniform it follows that 
\beg
\label{cop_dens_stand}
\di c(u,v)=1+ \sum_{n \geq 1}\sum_{ k \geq 1}\theta_{n,k}L_{n}(u)L_{k}(v), 
\en
where $\theta_{n,k}=\E(L_n(\td X)L_k(\td Y))$, with $(L_n)_{n\in\N}$ a basis of shifted Legendre polynomials orthonormal with respect to the uniform measure on $[0,1]$, with $L_0 \equiv 1$ and $L_1=\sqrt{3}(2x-1)$. We call $\theta_{n,k}$ the {\it standardized coefficients}.  Since the margins are uniform, we have
$\theta_{0,0}=1$,
 and   
$\theta_{n,0}=\theta_{0,k}=0$,
 for $n,k\ge 1$. 
From (\ref{cop_dens_stand}) we deduce 
\beg
\label{cop_stand}
\di C(u,v)=uv + \sum_{n \geq 1}\sum_{k \geq 1}\theta_{n,k}\displaystyle \int_{0}^{u}L_{n}(x)\mu(dx)\int_{0}^{v}L_{k}(y)\nu(dy).
\en
Representations (\ref{cop_dens_stand})-(\ref{cop_stand}) and   (\ref{cop_dens_lancaster})-(\ref{cop_lancaster}) are equivalent. 
We call  (\ref{cop_dens_stand})-(\ref{cop_stand}) the {\it standardized representations}, that is, the representation associated with  uniform margins in the Legendre basis.  
While margin transformation does not preserve bi-orthogonality (see the related work of \cite{Pommeret2005}), it does preserve  the Lancaster copula and thus allows Lancaster's class to be generalized.

\subsection{Generalized  Lancaster distributions}
By the copula invariance property, if $(X,Y)$ has a Lancaster copula  $C$ satisfying   (\ref{cop_dens_lancaster}) and (\ref{cop_lancaster})  and if  $(\td X, \td Y) := (G(X),H(Y))$, with $G,H$  increasing measurable functions, we have 
\be*
F_{\td X,\td Y}(x,y) & =& C(F_{\td X}(x), F_{\td Y}(y)).
\e*
Differentiating this equality we obtain 
\beg \label{dens_lancasterbis}
\di f_{\td X,\td Y}(x,y)=f_{\td X}(x)f_{\td Y}(y) \sum_{n\in \N}\rho_n P_{n}(F_{X}^{-1}(F_{\td X}(x)))Q_{n}(F_{Y}^{-1}(F_{\td Y}(y))),
\en 
with $\rho_n=\E(P_n(X)Q_n(Y)$. 
Define 
\beg 
\label{newpol}
\td P_n(x) = P_{n}(F_{X}^{-1}(F_{\td X}(x))) & {\rm and } & \td Q_n(y) = Q_{n}(F_{Y}^{-1}(F_{\td Y}(y))), 
\en
we obtain two sequences of bi-orthonormal functions. More precisely we have 
\be*
\E\left(\td P_n(\td X) \td P_k(\td X) \right)
=\E\left(\td Q_n(\td Y) \td Q_k(\td Y) \right)= \delta_{n,k},  
\ \E\left(\td P_n(\td X)\td Q_k(\td Y)\right)= \rho_n\delta_{n,k}.
\label{generalized3}
\e*

Consequently, applying $G,H$ on the margins of  a Lancaster distribution $\si$, we obtain a new distribution $\td \si$ with margins $\td \mu$ and $\td \nu$, satisfying 
\beg
\label{expansion2}
\di \td \sigma(dx,dy)&=& \di \sum_{n \in \N}\rho_{n}\td P_{n}(x)\td Q_{n}(y)\td \mu(dx) \td \nu(dy), 
\en 
where $\td P_n$ and $\td Q_n$ are the bi-orthonormal functions defined by (\ref{newpol}), with $\rho_n  = \E(\td P_n(\td X)\td Q_n(\td Y))$. 
We  call any joint distribution $\td \si$ satisfying (\ref{expansion2}) a {\it generalized Lancaster distribution}.  
  
Ultimately, Lancaster copulas not only enable us  understand the dependency between the components of a Lancaster distribution, but also let  us  understand the dependence structure of any associated  generalized Lancaster distribution. 

\subsection{Truncation} 
Lancaster copula constructions are based on infinite series, which in practice are truncated  to obtain approximations. 

For any  positive  integer $N$ we  define the following $N$-th order approximations 
\be*
\di c^{[N]}(u,v)  = \sum_{n=0}^{N} \rho_{n}P_{n}(F_{X}^{-1}(u))Q_{n}(F_{Y}^{-1}(v))
,\quad (u,v)\in (0,1)^2,
\e*
and
\be*
\di C^{[N]}(u,v)  =\sum_{n=0}^{N}\rho_{n}\int_{-\infty}^{F_{X}^{-1}(u)}P_{n}(x)\mu(dx)\int_{-\infty}^{F_{Y}^{-1}(v)}Q_{n}(y)\nu(dy) ,\quad (u,v)\in (0,1)^2.
\e*

\begin{rem}
As shown in Section \ref{simulation}, low-order truncations already provide good approximations. In estimation, choosing $N$ too large requires estimating high-order moments and may increase variability, whereas choosing $N$ too small may lead to an overly coarse approximation of the target copula.
In Section \ref{discussion} we discuss a data-driven procedure based on the Least-Squares Cross-Validation to automatically select $N$. 
\end{rem}

\begin{rem}
We can adapt the truncation in the standardized case as follows:  we consider two  positive  integers $N_1,N_2$ and  we  define the $(N_1,N_2)$-th order standardized approximation by 
\be*
\di c^{[N_1,N_2]}(u,v)  = 1+ \sum_{n=1}^{N_1} \sum_{k=1}^{N_2}  \theta_{n,k}L_{n}(u)L_{k}(v)
,\quad (u,v)\in (0,1)^2,
\e*
and
\be*
\di C^{[N_1,N_2]}(u,v)  = uv + \sum_{n=1}^{N_1}\sum_{k=1}^{N_2} \theta_{n,k}\int_{0}^{u}L_{n}(s)ds\int_{0}^{v}L_{n}(t)dt ,\quad (u,v)\in (0,1)^2.
\e*
\end{rem}

\subsection{Positivity}
The infinite expansions associated with the Lancaster copulas are nonnegative by construction since they originate from a copula. 
However,  the positivity of the truncated series  is not guaranteed. 
 This  point is essential if the series are used as estimators of the copulas and their densities (see the discussion in Section \ref{discussion}).   
 We propose two partial solutions here. 
 \bit
 \item 
 A first basic  solution is to replace the truncated series $c^{[N]}(u,v)$ by $\max(c^{[N]}(u,v),0)$.  
 In our numerical examples we did not observe violations of nonnegativity; however, this correction may be useful in general estimation settings. 
 In practice, this means that at certain bivariate points, the density will be zero. These areas are undoubtedly areas where the density is low, and we can expect that by increasing the number of observations and/or the degree of truncation, the estimates will all be positive.  
 
 Note that applying this maximum operator means the resulting function may no
longer integrate to exactly 1, and thus a  normalization constant would be required if the
user strictly needs a valid probability density function in practice. 
 \item 
 A second solution is to find a sufficient condition of positivity (since the necessary and sufficient condition is complex and exceeds the scope of this paper). We have the following result: 
 \begin{lem}
\label{lem:pos_general}
Let $\mu$ and $\nu$ be the marginal distributions of $X$ and $Y$, with continuous cdfs $F_X,F_Y$.
Let $(P_n)_{n\ge 0}$ and $(Q_n)_{n\ge 0}$ be orthonormal polynomial bases in $L^2(\mu)$ and
$L^2(\nu)$, with $P_0\equiv Q_0\equiv 1$.
For $N\ge 1$, consider  the truncated Lancaster copula density approximation
\[
c^{[N]}(u,v)\;=\;1+\sum_{n=1}^N \rho_n\,P_n(F_X^{-1}(u))\,Q_n(F_Y^{-1}(v)),
\qquad (u,v)\in(0,1)^2.
\]
If
\begin{equation}\label{cond1}
\sum_{n=1}^N |\rho_n|\,\|P_n\|_\infty \, \|Q_n\|_\infty \;\le\; 1,
\end{equation}
then $c^{[N]}(u,v)\ge 0$ for all $(u,v)\in(0,1)^2$.
\end{lem}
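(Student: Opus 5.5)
The approach is a direct lower bound by the triangle inequality, with no earlier result of the paper needed beyond the form of $c^{[N]}$. First I fix the interpretation of the norms: $\|P_n\|_\infty$ denotes $\sup_{x\in\mathcal X}|P_n(x)|$ and $\|Q_n\|_\infty=\sup_{y\in\mathcal Y}|Q_n(y)|$, taken over the supports of $\mu$ and $\nu$. Condition (\ref{cond1}) is meaningful only when these suprema are finite. If some $\rho_n\neq 0$ has a non-constant polynomial with unbounded support, the corresponding norm is $+\infty$ and (\ref{cond1}) fails, so the statement holds vacuously. The case $\rho_n=0$ contributes nothing, with the convention $0\cdot\infty=0$.

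Fix $(u,v)\in(0,1)^2$ and set $x=F_X^{-1}(u)$ and $y=F_Y^{-1}(v)$. Since $F_X$ is continuous and $u\in(0,1)$, the quantile $x$ is a finite real number lying in the support of $\mu$, i.e. in the closure of $\mathcal X$. The same holds for $y$ and $\nu$. By continuity of polynomials, the sup over the support equals the sup over its closure, so $|P_n(x)|\le \|P_n\|_\infty$ and $|Q_n(y)|\le\|Q_n\|_\infty$ for every $n$.

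The conclusion then follows in one step:
\[
c^{[N]}(u,v)\;\ge\;1-\sum_{n=1}^N|\rho_n|\,|P_n(x)|\,|Q_n(y)|\;\ge\;1-\sum_{n=1}^N|\rho_n|\,\|P_n\|_\infty\|Q_n\|_\infty\;\ge\;0,
\]
where the last inequality is (\ref{cond1}).

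The only delicate point is the justification that $F_X^{-1}(u)$ lies in the set over which the sup norm is taken. I would handle this with the generalized-inverse definition $F_X^{-1}(u)=\inf\{x: F_X(x)\ge u\}$. This infimum is a point of increase of $F_X$, hence belongs to the support of $\mu$. The argument otherwise requires no orthogonality and no condition (A1); orthonormality only guarantees $P_0\equiv Q_0\equiv1$, which gives the leading term $1$.
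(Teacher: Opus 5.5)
Your proof is correct and follows the same route as the paper: you bound $c^{[N]}(u,v)$ below by $1-\sum_{n=1}^N|\rho_n|\,|P_n(F_X^{-1}(u))|\,|Q_n(F_Y^{-1}(v))|$ via the triangle inequality, replace each factor by its sup norm, and apply the hypothesis. Your additional checks are sound details that the paper leaves implicit: that $F_X^{-1}(u)$ lies in the support of $\mu$ for $u\in(0,1)$, and that the hypothesis cannot hold (so the claim is vacuous) when a relevant sup norm is infinite.
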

\begin{proof}  ~Fix $(u,v)\in[0,1]^2$. By the triangle inequality,
\[
c^{[N]}(u,v)
\;=\;1+\sum_{n=1}^N \rho_n \,P_n(F_X^{-1}(u))\,Q_n(F_Y^{-1}(v))
\;\ge\; 1-\sum_{n=1}^N |\rho_n|\,|P_n(F_X^{-1}(u))|\,|Q_n(F_Y^{-1}(v))|.
\]
Since $|P_n(F_X^{-1}(u))|\le \|P_n\|_\infty$ and $|Q_n(F_Y^{-1}(v))|\le \|Q_n\|_\infty$, we obtain
\[
c^{[N]}(u,v)\;\ge\;1-\sum_{n=1}^N |\rho_n|\,\|P_n\|_\infty \, \|Q_n\|_\infty.
\]
Therefore, condition (\ref{cond1}) implies $c^{[N]}(u,v)\ge 0$ for all $(u,v)\in(0,1)^2$.
 \end{proof} 
 Restricting our study to the standardized Lancaster copulas 
we have the following result: 
\begin{corollary}
Let $(L_n)_{n\ge 0}$ be the orthonormal shifted Legendre polynomials on $[0,1]$. For $N\ge 1$, consider the $(N_1,N_2)$th truncated standardized density copula 
\[
c^{[N_1,N_2]}(u,v)\;=\;1+\sum_{n=1}^{N_1}\sum_{k=1}^{N_2} \theta_{n,k}\,L_n(u)L_k(v),\qquad (u,v)\in(0,1)^2.
\]
If
\beg
\label{cond2}
\sum_{n=1}^{N_1} \sum_{k=1}^{N_2}  |\theta_{n,k}|\,\sqrt{2n+1}\sqrt{2k+1} & \le & 1,
\en
then $c^{[N_1,N_2]}(u,v)\ge 0$ for all $(u,v)\in(0,1)^2$. 
\end{corollary}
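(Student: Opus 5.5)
The plan is to repeat the triangle-inequality argument of Lemma \ref{lem:pos_general}, now applied to the double sum, and to combine it with the classical sup-norm bound for the orthonormal shifted Legendre polynomials. Lemma \ref{lem:pos_general} is stated for a diagonal expansion, so it does not apply verbatim. Its proof, however, carries over unchanged to any finite expansion with uniformly bounded basis functions.

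\emph{Step 1 (reduction).} Fix $(u,v)\in(0,1)^2$. By the triangle inequality,
\[
c^{[N_1,N_2]}(u,v)\;\ge\;1-\sum_{n=1}^{N_1}\sum_{k=1}^{N_2}|\theta_{n,k}|\,|L_n(u)|\,|L_k(v)|
\;\ge\;1-\sum_{n=1}^{N_1}\sum_{k=1}^{N_2}|\theta_{n,k}|\,\|L_n\|_\infty\|L_k\|_\infty ,
\]
where $\|\cdot\|_\infty$ is the supremum over $[0,1]$. Alternatively, one can view the double sum as a diagonal-type expansion indexed by pairs $(n,k)$ and invoke Lemma \ref{lem:pos_general}. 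Since the lemma is written with a single index, I would rather repeat its two-line argument directly.

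\emph{Step 2 (sup-norm of Legendre polynomials).} This is the only real ingredient. The orthonormal shifted Legendre polynomial is $L_n(x)=\sqrt{2n+1}\,\mathcal P_n(2x-1)$, where $\mathcal P_n$ is the standard Legendre polynomial normalised by $\mathcal P_n(1)=1$. The normalisation follows from $\int_{-1}^1\mathcal P_n^2=2/(2n+1)$ and the change of variable $t=2x-1$; the case $L_1=\sqrt3(2x-1)$ confirms it. The classical inequality $|\mathcal P_n(t)|\le 1$ on $[-1,1]$ then gives $\|L_n\|_\infty=\sqrt{2n+1}$, with equality attained at $x=1$.

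The inequality $|\mathcal P_n|\le1$ is standard and I would quote it. If a self-contained argument were wanted, the Laplace integral representation
\[
\mathcal P_n(\cos\phi)=\frac1\pi\int_0^\pi(\cos\phi+i\sin\phi\cos\psi)^n\,d\psi
\]
does the job, because the modulus of the integrand is at most $1$.

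\emph{Step 3 (conclusion).} Substituting the bound from Step 2 into Step 1 gives
\[
c^{[N_1,N_2]}(u,v)\ge 1-\sum_{n=1}^{N_1}\sum_{k=1}^{N_2}|\theta_{n,k}|\sqrt{2n+1}\sqrt{2k+1}\ge0
\]
under condition (\ref{cond2}). The main obstacle is simply to justify the sup-norm value in Step 2 correctly, in particular to use the orthonormal normalisation rather than $\mathcal P_n(1)=1$. Everything else is the argument of Lemma \ref{lem:pos_general}.
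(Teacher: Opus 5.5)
Your proposal is correct and follows the paper's route. The paper likewise reruns the triangle-inequality argument from the proof of Lemma \ref{lem:pos_general} on the double sum, and then plugs in the bound $\|L_n\|_\infty^2\le 2n+1$, which it simply cites; your derivation of that bound via $L_n(x)=\sqrt{2n+1}\,\mathcal P_n(2x-1)$ and $|\mathcal P_n|\le 1$ correctly makes explicit what the paper takes from the literature.
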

\begin{proof} ~The proof comes from the bound
$\|L_n\|_\infty^2\le 2n+1$ valid for the orthonormal shifted Legendre basis (see \cite{Szego}) combined with proof of  Lemma \ref{lem:pos_general}. 
\end{proof}
\eit


\begin{rem}
Condition (\ref{cond1}) requires boundedness of
$u\mapsto P_n(F_X^{-1}(u))$ and $v\mapsto Q_n(F_Y^{-1}(v))$ on $[0,1]$.
This holds automatically when the supports of $\mu$ and $\nu$ are compact since $P_n,Q_n$
are continuous. For unbounded supports (e.g. Gaussian margins with Hermite polynomials),
$\|\cdot\|_\infty$ may be infinite.

\end{rem}

\section{Examples of bivariate Lancaster Copulas}\label{simulation}

\subsection{Downton exponential bivariate Lancaster Copula}

Let $\mu$ and $\nu$ denote two exponential distributions with parameters $\lambda_1>0$ and $\lambda_2>0 $, respectively. 
For $\rho >0$, let 
$(X,Y)$ be a  random vector  following the Downton Bivariate Exponential distribution, denoted $\text{DBVE}(\lambda_1, \lambda_2, \rho)$,  introduced in Downton \cite{Downton1970} and further studied in Goffard et al. \cite{Goffard2017}. This distribution is a Lancaster probability with margins $\mu$ and $\nu$, with $\rho_n=\rho^n$. It is commonly used to model the joint lifetimes of two components in reliability analysis. These components are assumed to fail after a random number of shocks, occurring at exponentially distributed intervals. The Lancaster copula density associated with $\text{DBVE}(\mu_1, \mu_2, \rho)$ is given by (\ref{cop_dens_lancaster}) 
and its Lancaster copula function is given by
(\ref{cop_lancaster}) for $(u,v)\in [0,1]^2$, 
where $F_{X}$ (resp. $F_Y$)  is the cdf of $X$ (resp. $Y$).
Here, $(P_n)_{n\in \N}$ and $(Q_n)_{n\in \N}$ form complete orthonormal bases with respect to the measures   $\mu$ and $\nu$, respectively, and are given by the generalized Laguerre polynomials.

We note that in expressions (\ref{cop_dens_lancaster}) and (\ref{cop_lancaster}), although the sums are theoretically infinite, they are numerically stable beyond a certain order. Specifically, we observe that the value of the sum remains numerically unchanged whether we consider a truncation order of 5, 100, or even 300. This stability is illustrated by the density graphs and the contour lines shown in Figure \ref{Fig:Downtown}. The copula density exhibits clear positive dependence, with larger values when $u$ and $v$ are simultaneously high (in particular near the upper--right corner and, more generally, along the main diagonal).
For the very low truncation order $N=1$, the approximation may display truncation artifacts (e.g., less smooth or slightly distorted contour lines).
From $N=6$ onward, the overall shape and the contour structure are already very close to those obtained with $N=100$, indicating that a moderate truncation order can provide an accurate approximation in this example.

\begin{figure}[!htbp]
\centering
\vspace*{-0.5cm}
\hspace*{-2cm}
\includegraphics[scale=0.2]{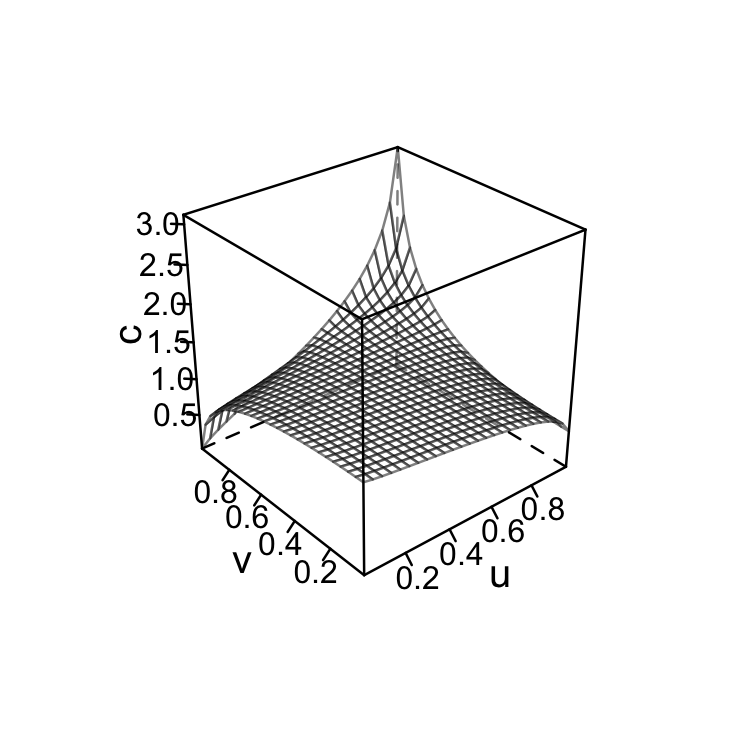}%
\hspace*{-1.2cm}
\includegraphics[scale=0.19]{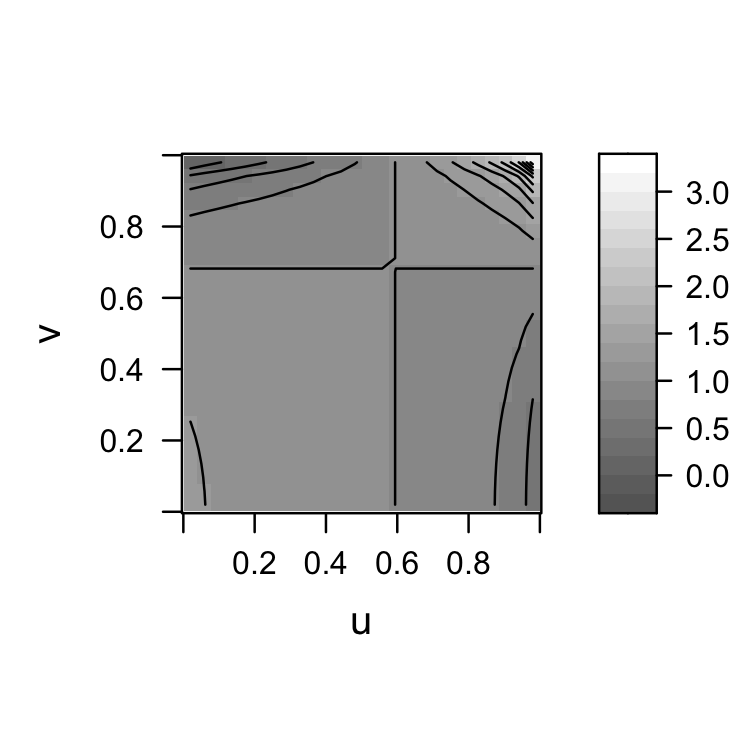}
\\[-0.cm]
\vspace*{-1.1cm}
\hspace*{-2cm}
\includegraphics[scale=0.2]{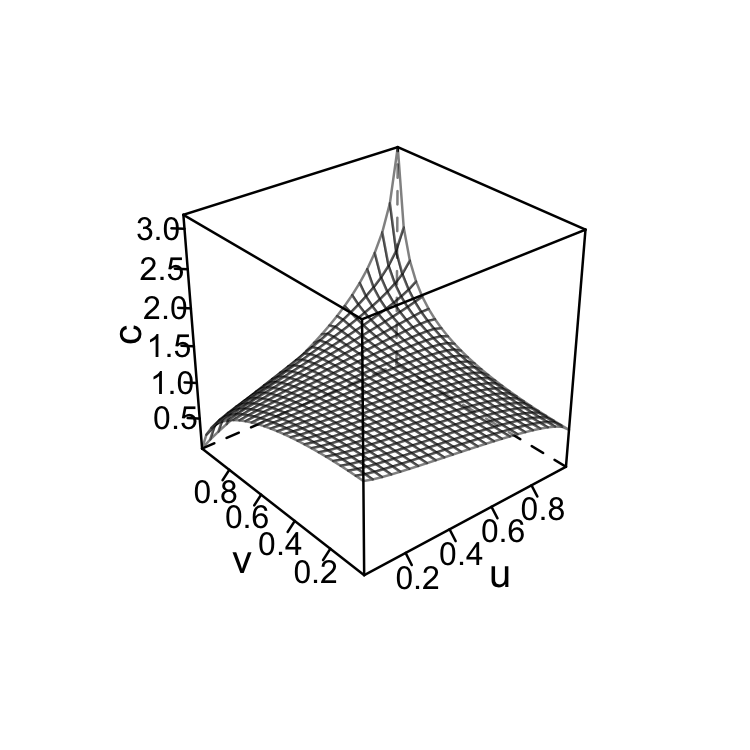}%
\hspace*{-1.2cm}
\includegraphics[scale=0.19]{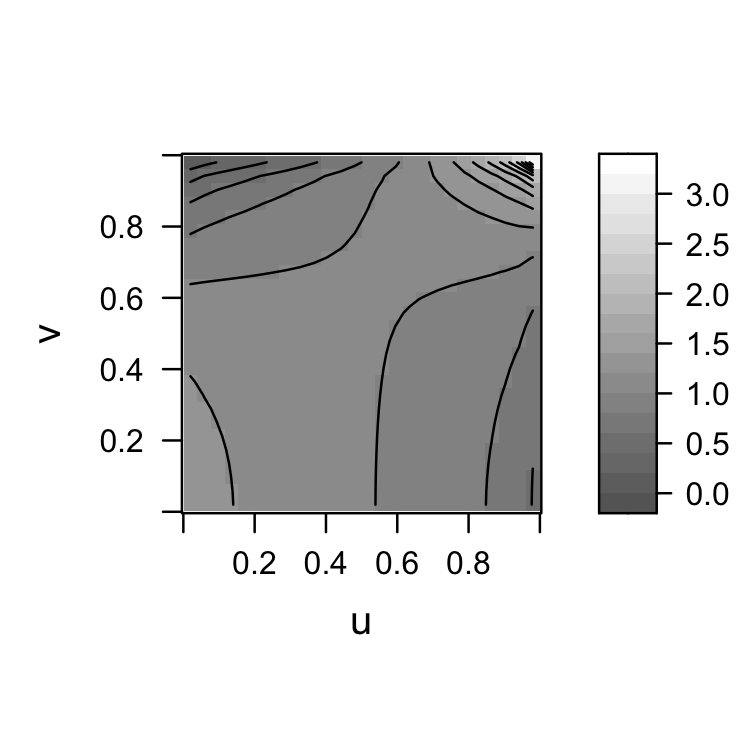}
\\[-0.cm]
\vspace*{-1.1cm}
\hspace*{-2cm}
\includegraphics[scale=0.2]{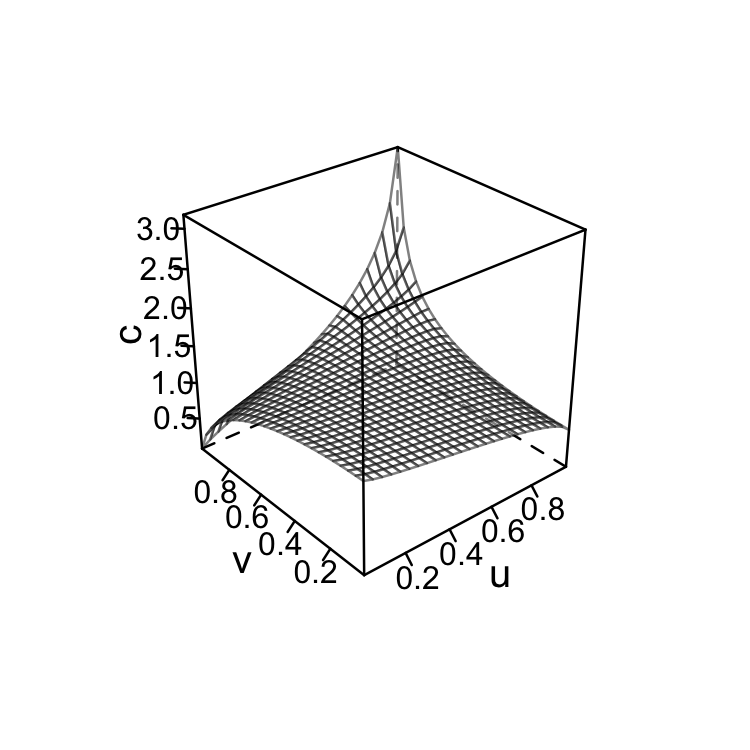}%
\hspace*{-1.2cm}
\includegraphics[scale=0.19]{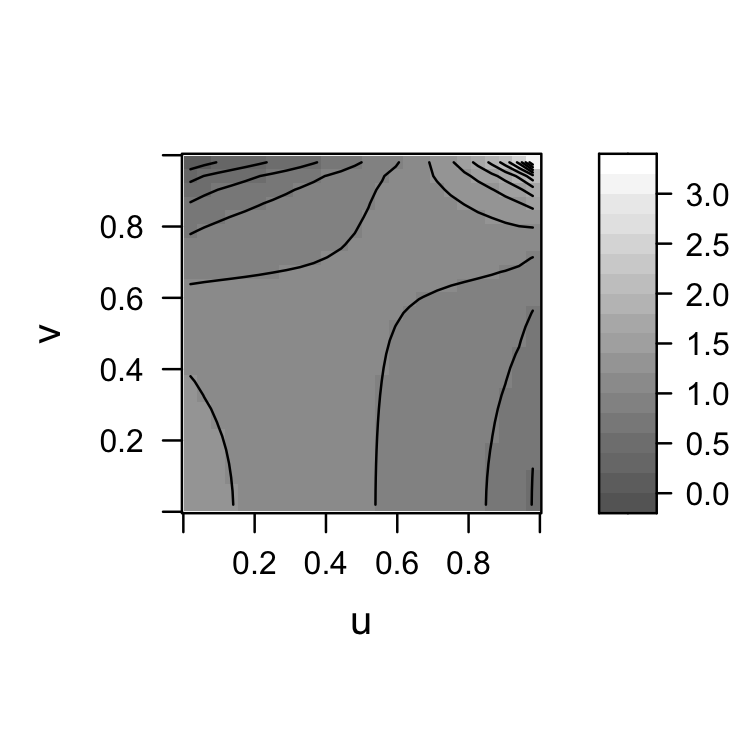}%
\\[-0.9cm]
\caption[]{Density graphs and contour lines  of $\text{DBVE}(\frac{1}{2}, 2, \frac{1}{4})$. Truncation order:  $1$ (top),   $6$ (middle),   $100$ (bottom).}
\label{Fig:Downtown}
\end{figure}

%
\subsection{Bivariate normal Lancaster copula}

Let $\di (X, Y )\sim \mathcal{N}\left(0,\Sigma\right)$ be  a bivariate centered normal vector with  covariance matrix 
$$\Sigma=\left(
\begin{array}{rrr}
	1& r \\
	r & 1 
\end{array}
\right), \ \ \ \ \ \ \ |r|<1.
$$ 
Writing $f_{X,Y}$ the  joint density of $(X,Y)$, and $f_{X},f_{Y}$ the marginal densities,
\cite{Sarmanov1967} considers the  expansion 
\beg\label{exanorm_chap5}
f_{X,Y}(x,y)=f_{X}(x)f_{Y}(y)\sum_{n=0}^{\infty}{\rho_n}H_n(x)H_n(y)
,\quad (x,y)\in \R^2
\en
where $(H_n)_{n\in\N}$ are  Hermite polynomials $\mathcal{N}(0,1)$-orthonormal, 
and where  
$
\rho_n=
    r^n$  for all $n \geq 0$. 
%
Expression (\ref{exanorm_chap5}) shows that $(X,Y)$ has a Lancaster distribution. 
%
%

Therefore, the Lancaster copula density associated with  $(X,Y)$ is given by
\be*
\di c(u,v)=\sum_{n=0}^{\infty}{\rho_n}H_n\left(F_{X}^{-1}(u)\right)H_n\left(F_{Y}^{-1}(v)\right),
 \quad (u,v)\in (0,1)^2
\e*
and its Lancaster copula function is as follows
\be*
\di C(u,v)=\sum_{n \in \N}{\rho_n}\int_{-\infty}^{F_{X}^{-1}(u)}H_{n}(x)\mu(dx)\int_{-\infty}^{F_{Y}^{-1}(v)}H_{n}(y)\nu(dy).
\e*
Figure \ref{Fig:gauss} represents density graphs and contour lines for the bivariate normal Lancaster copula with a truncation order 1, 5, and 100. 
As expected for a Gaussian copula with $r=0.5$, the density is concentrated around the diagonal $u \approx v$, reflecting symmetric positive dependence.
At the lowest order $N=1$, the truncated series may yield a crude representation and can even violate nonnegativity locally when interpreted as a density, which highlights the limitations of overly small truncation orders.
Starting from $N=5$, the contour lines become smooth and the approximation is visually almost indistinguishable from the high-order truncation $N=100$. 
It appears that the numerical values no longer seem to vary  from  order 5. 

\begin{figure}[!htbp]
\centering
\vspace*{-0.5cm}
\hspace*{-2cm}
\includegraphics[scale=0.2]{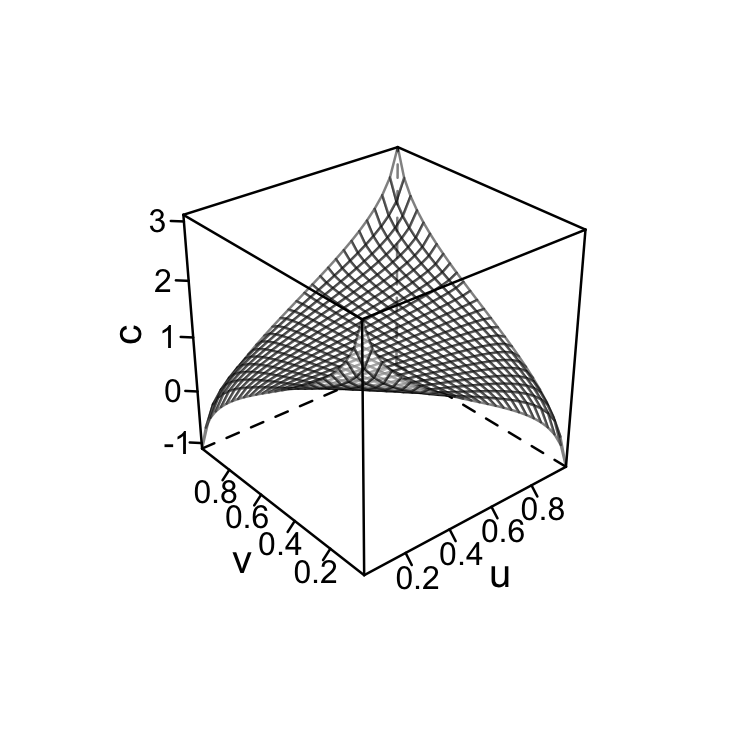}%
\hspace*{-1.2cm}
\includegraphics[scale=0.19]{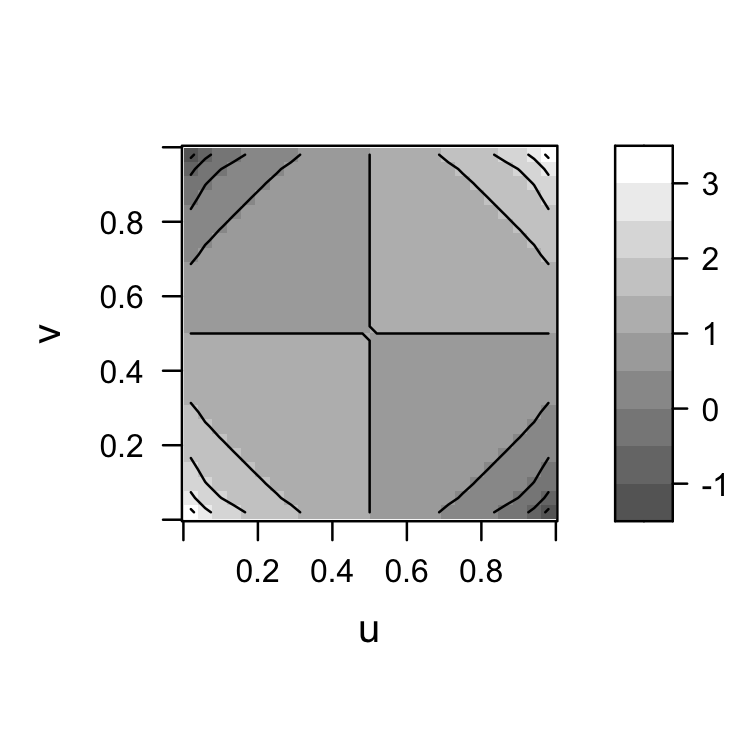}
\\[-0.cm]
\vspace*{-1.1cm}
\hspace*{-2cm}
\includegraphics[scale=0.2]{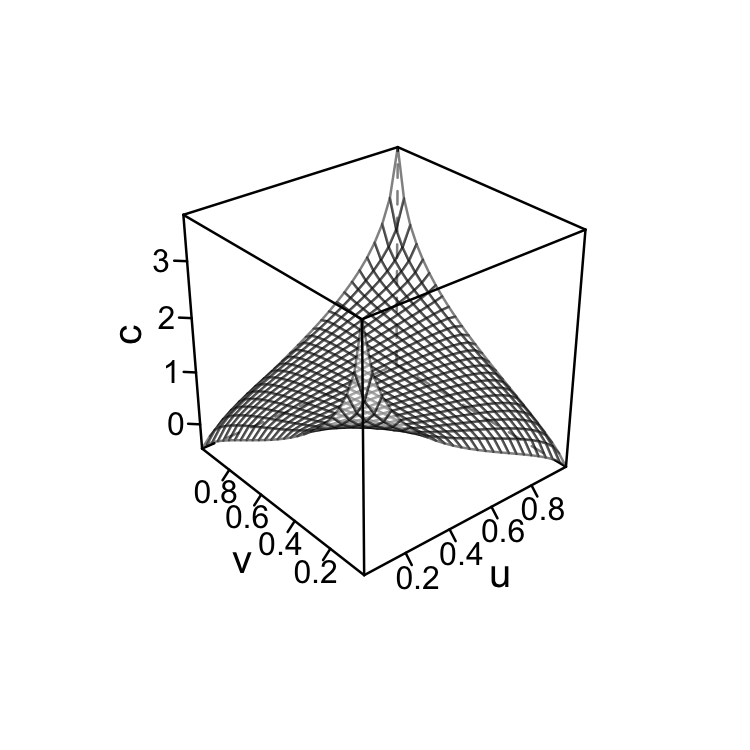}%
\hspace*{-1.2cm}
\includegraphics[scale=0.19]{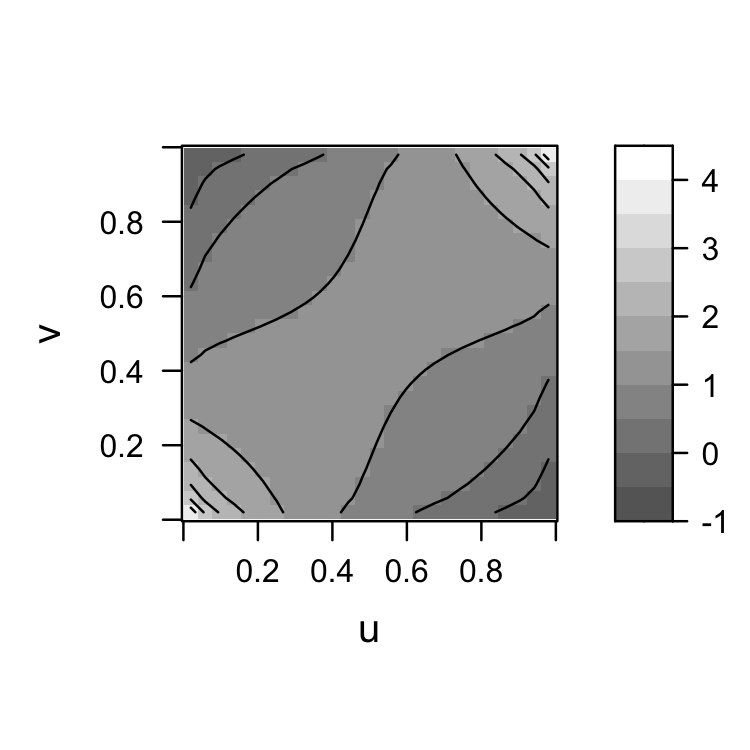}
\\[-0.cm]
\vspace*{-1.1cm}
\hspace*{-2cm}
\includegraphics[scale=0.2]{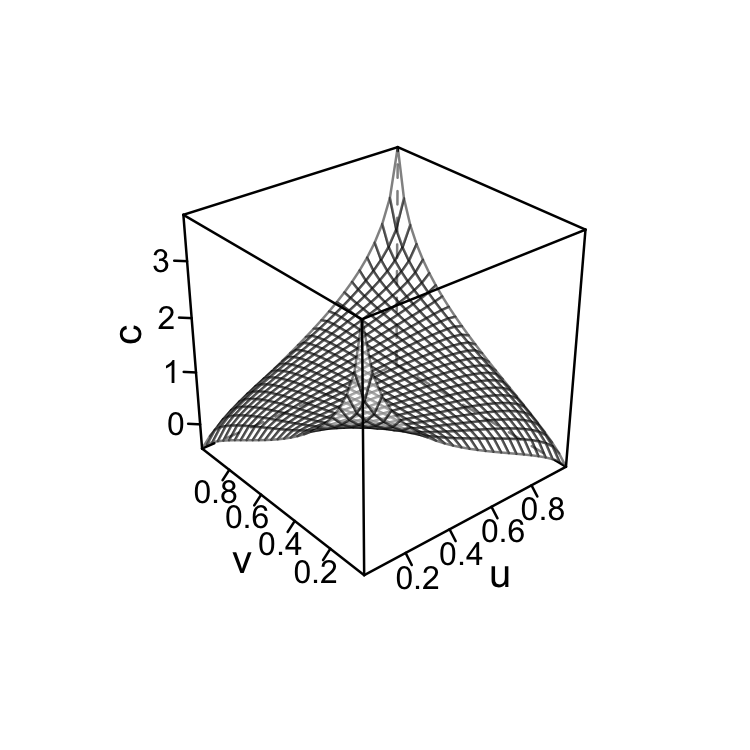}%
\hspace*{-1.2cm}
\includegraphics[scale=0.19]{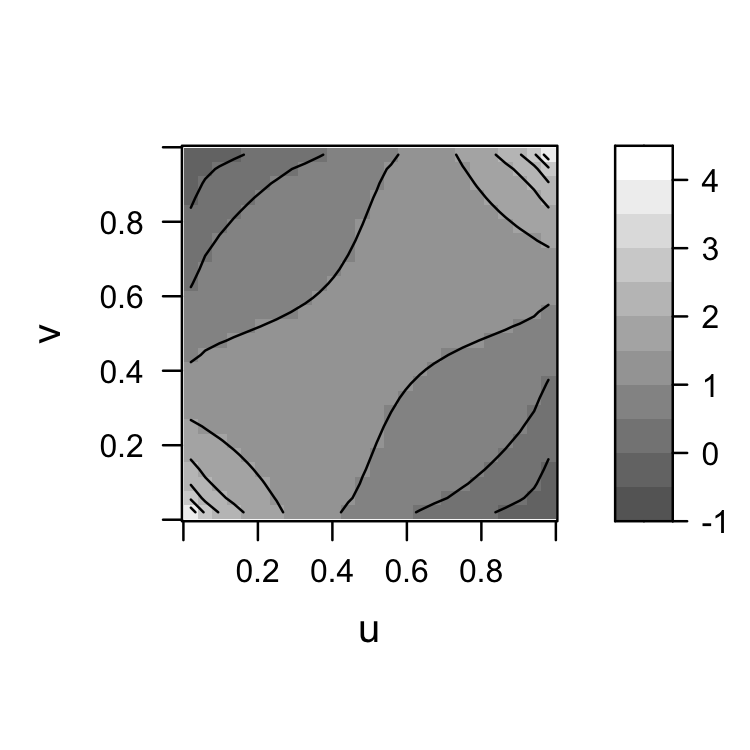}%
\\[-0.9cm]
\caption[]{Density graphs and contour lines of Gaussian Lancaster copula density with parameter $r=0.5$. Truncation order: $1$ (in top),  $5$ (in middle),  $100$ (in bottom).}
\label{Fig:gauss}
\end{figure}

\subsection{Bivariate gamma Lancaster copula}

The Lancaster expansion for a bivariate gamma vector $(X,Y)$,  derived by \cite{Sarmanov1970a,Sarmanov1970b}, and studied in Dueck et al. \cite{Dueck}, can be stated as follows (see Kotz et al. \cite{Kotz2004}) for $x,y \geq 0$: 
\beg\label{gamma_lanc}
\di f_{X,Y}=f_{X}(x)f_{Y}(y)\sum_{n=0}^{\infty}a_n {\mathcal L}_{n}^{(\alpha-1)}(x){\mathcal L}_{n}^{(\beta-1)}(y), 
\en 
where  $\alpha\geq \beta>0$, $a_n =\left(\frac{(\beta)_n}{(\alpha)_n}\right)^{1/2}\lambda^n$, with $\lambda \in (0,1)$,   and  $({{\mathcal L}}_{n}^{(\alpha)})_{n \in \N}$ (resp. $({{\mathcal L}}_{n}^{(\beta)})_{n \in \N}$) are  the Laguerre polynomials  $f_X$ (reps. $f_Y$)-orthonormal. 
%
%
%
%
%
The corresponding marginal density functions are univariate gamma with  parameters $\alpha$ and $\beta$.
%
%
\cite{Kotz2004} proved that if $\alpha=\beta$ 
then the density function (\ref{gamma_lanc}) reduces to the Kibble-Moran bivariate gamma density function with
$Corr(X,Y)=\lambda$ and (\ref{gamma_lanc}) represents the Lancaster expansion for $(X, Y)$. 
Therefore, the Lancaster copula density associated with  the bivariate gamma distribution  $(X,Y)$ is
\be*
\di c(u,v)=\sum_{n=0}^{\infty}\lambda^n{\mathcal L}_{n}^{(\alpha)}\left(F_{X}^{-1}(u)\right){\mathcal L}_{n}^{(\alpha)}\left(F_{Y}^{-1}(v)\right)
,\quad (u,v)\in (0,1)^2, \quad \alpha>1,
\e*
and its Lancaster copula function is given by
\be*
\di C(u,v)=\sum_{n \in \N}\lambda^n\int_{-\infty}^{F_{X}^{-1}(u)}{\mathcal L}_{n}^{(\alpha)}(x)\mu(dx)\int_{-\infty}^{F_{Y}^{-1}(v)}{\mathcal L}_{n}^{(\alpha)}(y)\m(dy),
\e*
where $\mu$ is the gamma measure with density proportional to $x^{\alpha-1}\exp(-x)$. 
Figure \ref{Fig:gamma} shows density graphs and contour lines for the bivariate gamma Lancaster copula  with parameters $\lambda=0.2$ and $\alpha=\beta = 3$, with truncation orders  2, 6, and 100. There is no noticeable difference between the numerical results obtained with these three orders, which once again demonstrates the rapid convergence and quality of the approximations.  Here the dependence is moderate: the density remains close to the independence benchmark (values near $1$) with a mild increase along the diagonal. 
Moreover, the relatively diffuse contours indicate that the density does not concentrate sharply in the extreme corners.

\begin{figure}[!htbp]
\centering
\vspace*{-0.5cm}
\hspace*{-2cm}
\includegraphics[scale=0.2]{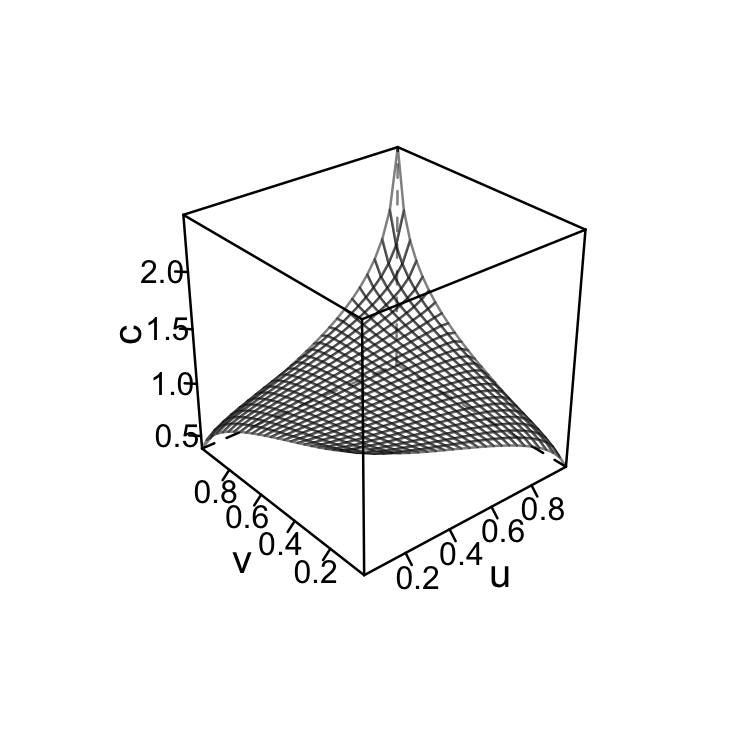}%
\hspace*{-1.2cm}
\includegraphics[scale=0.19]{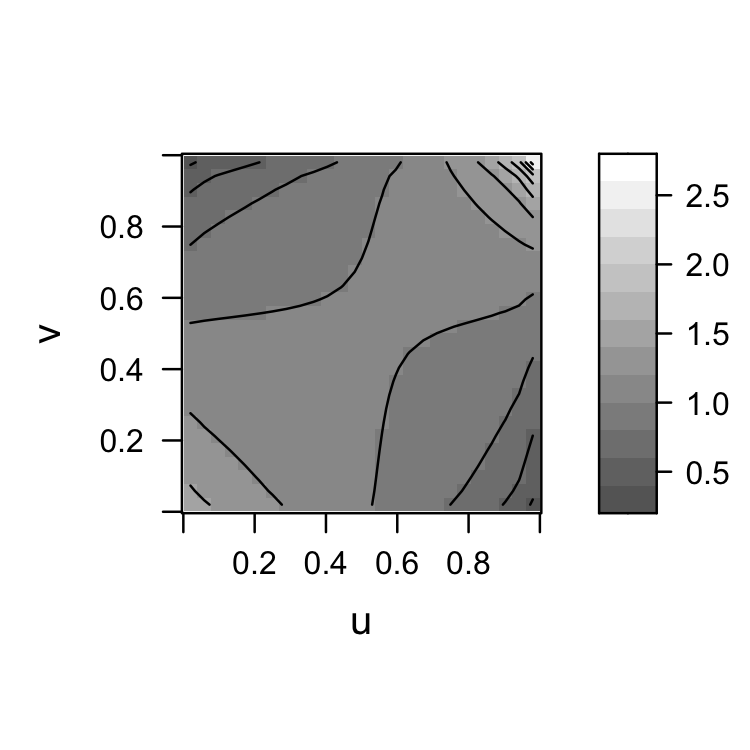}
\\[-0.cm]
\vspace*{-1.1cm}
\hspace*{-2cm}
\includegraphics[scale=0.2]{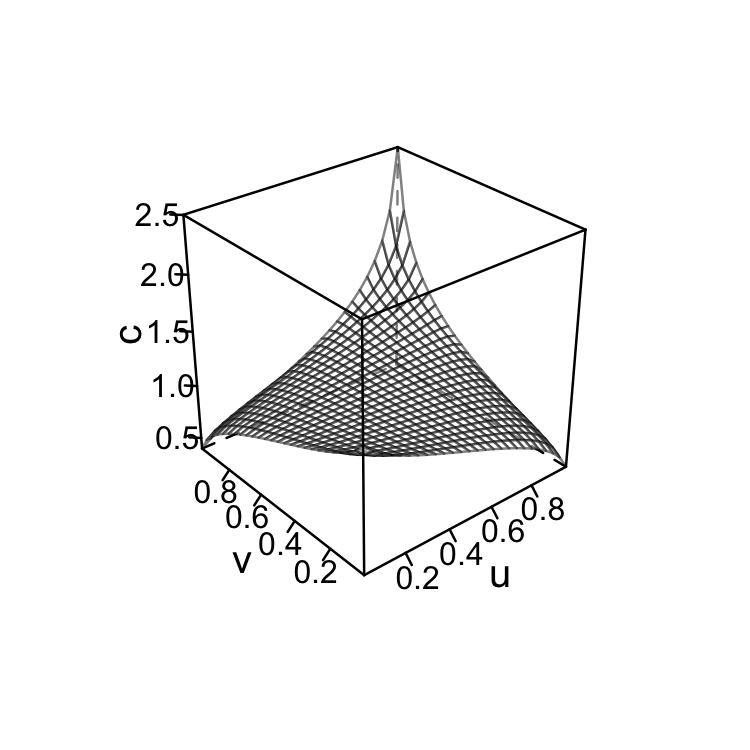}%
\hspace*{-1.2cm}
\includegraphics[scale=0.19]{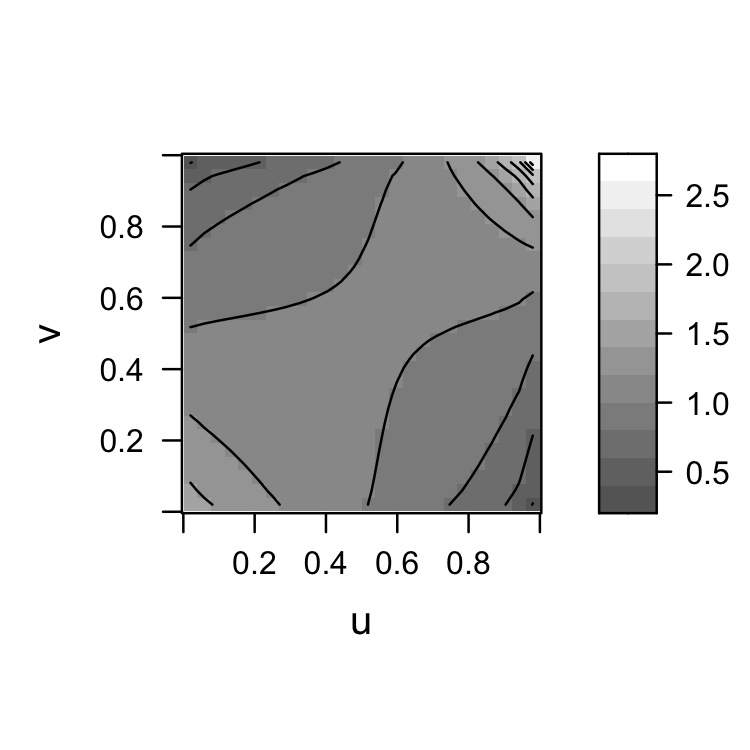}
\\[-0.cm]
\vspace*{-1.1cm}
\hspace*{-2cm}
\includegraphics[scale=0.2]{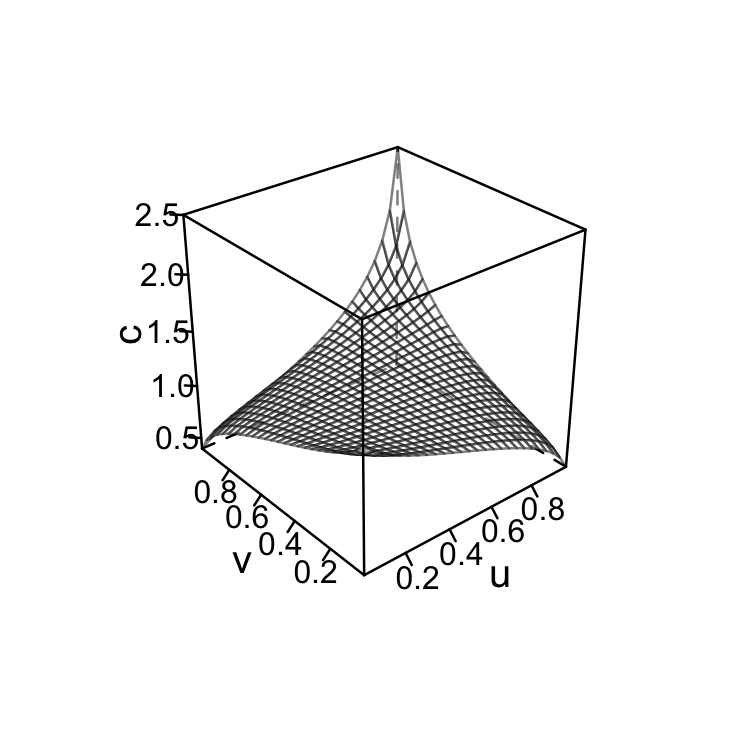}%
\hspace*{-1.2cm}
\includegraphics[scale=0.19]{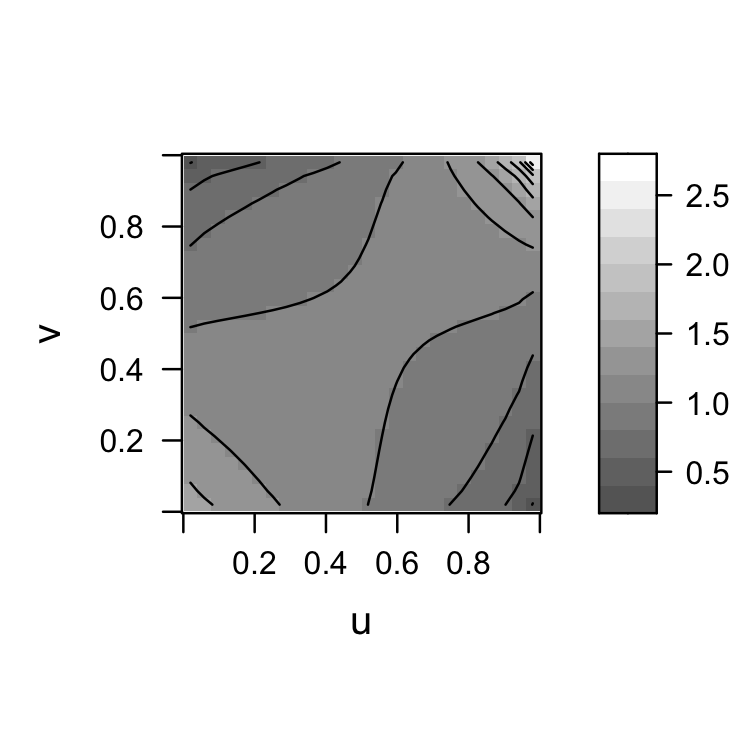}%
\\[-0.9cm]
\caption[]{Density graphs and contour lines of gamma Lancaster copula density with parameters $\lambda=0.2$ and $\alpha=\beta = 3$. Truncation order: $2$ (in top),  $6$ (in middle),  $100$ (in bottom).}
\label{Fig:gamma}
\end{figure}

\subsection{Bivariate beta Lancaster copulas}

Consider $\di (X,Y)$ following the triangular bivariate beta distribution (see \cite{Buja1990})
$$\di \sigma(dx,dy)=\frac{a+b}{\mathbf{B}(a,b)}x^{a-1}y^{b-1}\mathds{1}_{\mathcal{A}}(x,y), $$
where  $$\mathcal{A}=\{(x,y): x,y\geq 0, x+y \leq 1, 1\leq a,b < \infty\}$$
and $\di \mathbf{B}$ denotes the beta function. 
The marginal distribution of $X$ is $\di B_{a,b+1}, \text{ i.e., }\mu(dx)\propto B_{a,b+1}(dx)$, where $B_{a,b+1} $ denotes the univariate beta distribution with parameters $a$ and $b+1$. This distribution has the shifted Jacobi 
polynomials, with parameter $\alpha=a+1$ and $\beta=b$ on the interval $\di (0,1)$, as its orthonormal polynomials. Similarly, $\di \nu(dy)\propto B_{b,a+1}(dy)$. The Lancaster sequence is given by $$\di \rho_{n}=\frac{(-1)^{n}\sqrt{ab}}{\sqrt{(a+n)(b+n)}}.$$
As a result, the copula density function and the copula function are expressed as a Jacobi series that involves
the marginal quantile transforms of the beta distributions. 

\subsection{Construction of Lancaster copulas}
The class of Quadratic Natural Exponential Families (QNEFs),  
 characterized in \cite{Morris}, includes  classical distributions such as Gaussian, Poisson, binomial, negative binomial, gamma, and hyperbolic distributions. These  distributions have the particularity that their variance can be expressed as a quadratic form of their mean. 
Lancaster  \cite{Lancaster1975} proved the following result:  
\begin{theorem}\label{theoLancaster}[Lancaster, 1975] Let $U, V$  be two independent random variables in a QNEF  and let $X = U  $ and $Y = U + V$. Then the joint distribution of $(X, Y)$ is a Lancaster probability.
\end{theorem}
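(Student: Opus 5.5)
The plan is to verify the bi-orthogonality property (\ref{egaliteLancaster3}) directly, working with generating functions of the orthogonal polynomials attached to a QNEF. Implicitly, $U$ and $V$ belong to the same QNEF, with natural parameter $\theta$ and possibly different "convolution parameters" (shapes) $\lambda_1,\lambda_2$. Then $X=U$ has law $\mu_{\lambda_1}$ and $Y=U+V$ has law $\mu_{\lambda_1+\lambda_2}$, since QNEFs are closed under convolution in the shape parameter.

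Let $(P_n)$ and $(Q_n)$ be the orthogonal polynomials of $\mu_{\lambda_1}$ and $\mu_{\lambda_1+\lambda_2}$. By Morris's results, they admit the exponential-type generating functions
\[
\sum_{n\ge 0}P_n(x)\frac{z^n}{n!}=\frac{e^{x\,\psi(z)}}{e^{\lambda_1\kappa(\psi(z))}},
\]
for suitable analytic $\psi$ and cumulant function $\kappa$, and similarly for $Q_n$ with $\lambda_1+\lambda_2$. The key observation is that $\psi$ depends only on the variance function, not on $\lambda$. Write $\psi$ to denote this common function and $M_\lambda(t)=e^{\lambda\kappa(t)}$ for the Laplace transform.

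The main computation is the conditional expectation $\E(Q_k(Y)\mid X=x)$. Using independence of $U$ and $V$,
\[
\E\Big(\sum_k Q_k(x+V)\frac{z^k}{k!}\Big)=\frac{e^{x\psi(z)}\,M_{\lambda_2}(\psi(z))}{M_{\lambda_1+\lambda_2}(\psi(z))}=\frac{e^{x\psi(z)}}{M_{\lambda_1}(\psi(z))}=\sum_n P_n(x)\frac{z^n}{n!}.
\]
Identifying coefficients gives $\E(Q_k(Y)\mid X)=c_k P_k(X)$, with $c_k$ a ratio of normalizing constants. Then $\E(P_n(X)Q_k(Y))=c_k\E(P_n(X)P_k(X))=0$ for $n\ne k$, which is bi-orthogonality. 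After normalizing to orthonormal bases, $\rho_n=c_n\|P_n\|/\|Q_n\|$.

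The remaining points are routine:
\begin{itemize}
\item Interchanging expectation and summation is justified, for $z$ small, by the exponential moment assumption.
\item For (A1), one must check $\rho_n^2=\E(Q_n(Y)\mid X)$-norm ratios $\le 1$. In fact $\rho_n^2=\|\E(Q_n(Y)\mid X)\|^2\le\|Q_n\|^2=1$. Summability follows from the explicit norms $\|P_n\|^2=n!\prod_{j<n}(\lambda_1+jv_2)$, which give $\rho_n^2=\prod_{j<n}(\lambda_1+jv_2)/(\lambda_1+\lambda_2+jv_2)$. This product decays like $n^{-\lambda_2/v_2}$ when the quadratic coefficient $v_2>0$, and geometrically or faster otherwise.
\end{itemize}

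The main obstacle is establishing the generating-function form with $\psi$ independent of $\lambda$. I would derive it from the Rodrigues-type formula $P_n\propto \frac{d^n}{dm^n}$ applied to the exponential family density, following Morris, which yields $\sum P_n z^n/n!=f_{\theta+\psi(z)}/f_\theta$ in the mean parametrization. If (A1) fails in some case, such as $v_2>0$ with small $\lambda_2$, I would restrict the statement to the Lancaster bi-orthogonality property, which is what the theorem asserts.
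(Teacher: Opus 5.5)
The paper gives no proof of this theorem. It is quoted from Lancaster (1975) and used only to obtain Corollary~\ref{cor1}. Your argument therefore supplies something the paper omits, and it is correct; it is the classical Meixner--Eagleson argument. Once the orthogonal polynomials of $\mu_\lambda$ are the Taylor coefficients of $e^{x\psi(z)}/M_\lambda(\psi(z))$ with $\psi$ free of $\lambda$, the factorization $M_{\lambda_1+\lambda_2}=M_{\lambda_1}M_{\lambda_2}$ gives $\mathbb{E}(Q_k(Y)\mid X)=P_k(X)$ exactly. This regression is exactly proportional to $P_k$, not merely a polynomial of degree at most $k$. So both cases $n<k$ and $n>k$ of bi-orthogonality follow at once. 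You also get the explicit Lancaster sequence $\rho_n^2=\prod_{j<n}(\lambda_1+jv_2)/(\lambda_1+\lambda_2+jv_2)$, which the paper never states. The paper's citation buys brevity; your route buys a self-contained proof plus explicit coefficients.

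Three points should be made precise.

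\emph{First}, the $\lambda$-independence of $\psi$ holds only for the right expansion variable. In Morris's mean-shift variable $t$, the shift $\theta(m+t)-\theta(m)$ depends on $\lambda$, because $m=\lambda\kappa'(\theta)$. With $z=t/V_\lambda(m)$ instead, $\psi$ is determined by $\kappa'(\theta+\psi(z))=\kappa'(\theta)+\kappa''(\theta)z$, which is $\lambda$-free. In this normalization $c_k=1$ and $\|P_n\|^2=n!\,\kappa''(\theta)^n\prod_{j<n}(\lambda_1+jv_2)$. The factor $\kappa''(\theta)^n$ is shared with $Q_n$ and cancels in $\rho_n$.

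\emph{Second}, your reading ``same NEF, same natural parameter $\theta$, shapes $\lambda_1,\lambda_2$'' is necessary, not merely convenient. For $U\sim\mathrm{Exp}(1)$ and $V\sim\mathrm{Exp}(2)$ one gets $\mathbb{E}(U\mid U+V=y)=ye^y/(e^y-1)-1$, which is not affine, so bi-orthogonality fails. Given the paper's loose phrasing ``in a QNEF'', this is worth stating explicitly.

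\emph{Third}, your remark on (A1) is correct, and it affects the paper's own use of the theorem. For $U,V$ i.i.d.\ exponential, $\rho_n^2=1/(n+1)$, so (A1) fails and Proposition~\ref{prop1} does not cover Corollary~\ref{cor1} in that case.

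The interchange of $\mathbb{E}$ and $\sum_k$ is routine. Use a Cauchy estimate $\sum_k|Q_k(y)|\,|z|^k/k!\le Ce^{\varepsilon|y|}$ for small $|z|$, combined with the exponential moments.
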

As a direct consequence of Theorem \ref{theoLancaster},  we obtain a simple construction of  Lancaster copulas associated with QNEFs as follows:  
\begin{corollary}\label{cor1}
  Let $U$ and $V$ be two continuous random variables with distribution $\mu$  in a QNEF. Let $(Q_n)_{n\in\N}$ be a basis of $\mu$-orthonormal polynomials. 
  Then the joint distribution $\si$ associated with  $(U,U+V)$ has a Lancaster copula density given by 
  \begin{align*}
\di c(u,v)&= \sum_{n \in \N}\rho_{n}Q_{n}(F_{U}^{-1}(u))\widetilde Q_{n}(F_{U+V} ^{-1}(v)), 
  \end{align*}
  where 
  \begin{align*}
      \rho_n & = 
     \E(Q_n(U)Q_n(U+V)), 
  \end{align*}
  and $(\widetilde Q_n)_{n\in\N}$ is the basis of orthonormal polynomials associated with  the 2-convolution measure $\mu^{*2}$.  
\end{corollary}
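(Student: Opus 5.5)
The plan is to reduce Corollary \ref{cor1} to Proposition \ref{prop1} by way of Theorem \ref{theoLancaster}. The only genuine work is checking that the hypotheses of Proposition \ref{prop1} hold for the pair $(X,Y)=(U,U+V)$.

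First I fix the ingredients. Put $X=U$ and $Y=U+V$. Since $U$ and $V$ are independent with law $\mu$, the margins are $\mu$ for $X$ and $\mu^{*2}$ for $Y$. Theorem \ref{theoLancaster} says that the law $\sigma$ of $(X,Y)$ is a Lancaster probability. Its biorthogonal systems are $(Q_n)$, orthonormal for $\mu$, and $(\widetilde Q_n)$, orthonormal for $\mu^{*2}$.

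Both families exist. A QNEF has exponential moments on an open interval $\Theta$. The convolution $\mu^{*2}$ stays in the same NEF family, so it has moments of every order and is determined by them. Its orthonormal polynomials are therefore well defined, and the exponential-moment assumption at the start of Section \ref{sec2_chap5} holds for both margins. The Lancaster sequence is $\rho_n=\E(Q_n(X)\widetilde Q_n(Y))=\E(Q_n(U)\widetilde Q_n(U+V))$; I read the statement's $\E(Q_n(U)Q_n(U+V))$ as this expression, with the second factor taken from the $\mu^{*2}$-basis.

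Next I verify (A1) and (A2). For (A2), $U$ is continuous and lies in a QNEF, so its density is continuous and strictly positive on its support; among continuous QNEFs this covers the Gaussian, gamma and hyperbolic-type families. The density of $U+V$ is the convolution $f_U*f_U$, which inherits continuity and positivity on the interior of the support of $\mu^{*2}$. The joint density also exists, because $(U,U+V)$ is a linear bijection of $(U,V)$, and $(U,V)$ has density $f_U\otimes f_U$.

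For (A1), Bessel's inequality gives $\sum_n\rho_n^2\le \|d\sigma/d(\mu\otimes\mu^{*2})\|^2_{L^2}$. I must therefore show that this Radon–Nikodym derivative is square-integrable. The step I expect to be the main obstacle is exactly this, because the statement does not spell it out. I would first handle it through Lancaster's explicit form: for QNEF convolutions the sequence $\rho_n$ is given by a ratio involving the variance function, roughly $\rho_n^2=\binom{\cdot}{n}$-type quantities which behave like $2^{-n}$ up to polynomial factors. This makes $\sum\rho_n^2$ finite. Failing that, I would simply include (A1) as a standing assumption inherited from the paper's setting.

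Once (A1) and (A2) are in place, I apply Proposition \ref{prop1} with $P_n=Q_n$, the second basis equal to $\widetilde Q_n$, $F_X=F_U$ and $F_Y=F_{U+V}$. Formula (\ref{cop_dens_lancaster}) then gives exactly the displayed density. The copula function follows in the same way from (\ref{cop_lancaster}).
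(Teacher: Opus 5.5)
Your reduction (Theorem \ref{theoLancaster} gives the Lancaster structure of $(U,U+V)$, then Proposition \ref{prop1} gives the density) is exactly what the paper means when it calls the corollary a direct consequence. You are also right to read $\rho_n$ as $\E(Q_n(U)\widetilde Q_n(U+V))$.

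The genuine gap is your verification of (A1). The claim that $\rho_n^2$ decays like $2^{-n}$ up to polynomial factors holds only in the Gaussian case. Let $p_n$ and $\pi_n$ be the monic orthogonal polynomials of $\mu$ and $\mu^{*2}$. Then $\E[\pi_n(U+V)\mid U]$ is a monic polynomial of degree $n$ in $U$. By biorthogonality it must equal $\rho_n\frac{\|\pi_n\|}{\|p_n\|}\,p_n(U)$, which gives $\rho_n^2=\|p_n\|^2_{L^2(\mu)}/\|\pi_n\|^2_{L^2(\mu^{*2})}$.

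For Gaussian margins this ratio is $2^{-n}$. For the gamma family with shape $\alpha$, and likewise for the generalized hyperbolic secant family with convolution parameter $\alpha$, the monic norms are $n!(\alpha)_n$ up to a factor $c^{n}$ that cancels. Hence
\[
\rho_n^2=\frac{(\alpha)_n}{(2\alpha)_n}\sim\frac{\Gamma(2\alpha)}{\Gamma(\alpha)}\,n^{-\alpha},
\]
which is summable only for $\alpha>1$.

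Concretely, take $U,V$ i.i.d. standard exponential. Then $\rho_n^2=1/(n+1)$. Directly, $d\sigma/d(\mu\otimes\mu^{*2})(u,y)=e^{u}/y$ on $\{0<u<y\}$, and its squared $L^2(\mu\otimes\mu^{*2})$-norm is $\int_0^\infty(1-e^{-y})/y\,dy=\infty$. So your Bessel-inequality route cannot succeed there, and the density series does not converge in $L^2(\mu\otimes\mu^{*2})$.

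Hence (A1) is not a consequence of the QNEF hypothesis, and your fallback is not optional. The corollary holds once (A1) is added as an explicit hypothesis, which is what the paper's one-line argument tacitly assumes. For continuous QNEFs, this means Gaussian margins, or gamma or generalized hyperbolic secant margins with convolution parameter $\alpha>1$.

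With that hypothesis in place, the rest of your argument is fine. This covers your treatment of (A2), the existence of both orthonormal bases, and the application of Proposition \ref{prop1} with $P_n=Q_n$ and second basis $\widetilde Q_n$.
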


\section{Dependence measures}
\label{sectail}
In this section, we focus on two classical ways of measuring dependence: rank-based measures, which capture monotone associations and tail dependence, which describes the behavior of extreme co-movements.
\subsection{Spearman's rho and Kendall's tau} 

Let $(X,Y)$ be a continuous bivariate random variable with copula $C$. Two widely used rank-based measures are Spearman's $\rho$ and Kendall's $\tau$ given by 
(see Schweitzer and Wolff \cite{Schweitzer}) %
\begin{eqnarray*}
\rho_C = 12 \int_0^1 \int_0^1 C(u,v) \, du \, dv - 3, \quad
\tau_C = 4 \int_0^1 \int_0^1 C(u,v)\, dC(u,v) - 1.
\end{eqnarray*}
These definitions highlight that both coefficients depend only on the dependence structure
encoded by the copula and not on the marginal distributions themselves. In what follows, we derive explicit expressions for $\rho_C$ and $\tau_C$ in terms of the copula coefficients $\rho_n$ when $C$ is a Lancaster copula. 

We begin with a technical lemma on integrals of shifted Legendre polynomials, which will be used repeatedly in the proofs below.
\begin{lem}\label{lem:Legendre}
Let $(L_n)_{n\ge0}$ the shifted Legendre polynomials on $[0,1]$, orthonormal in $L^2(0,1)$. Then we have: 
  \beg\label{lem:Legen_eq1}
  \int_{0}^{1}u L_{n}(u) du&=& \frac{1}{2}  \delta_{n,0}+  \frac{\sqrt{3}}{6}\delta_{n,1}, \! \forall n\geq 0
\\
\label{lem:Legen_eq2}
\di \int_{0}^{1}\Bigg(\int_{0}^{u}L_{n}(s)ds\Bigg)du &=&  \frac{1}{2}  \delta_{n,0} - \frac{\sqrt{3}}{6}\delta_{n,1}, \! \forall n\geq 1 
\\
\label{lem:Legen_eq3}
\di\int_0^u L_n(s)\,ds &=&\alpha_n L_{n+1}(u) - \beta_n L_{n-1}(u)-\alpha_n L_{n+1}(0) + \beta_n L_{n-1}(0), \! \forall n\geq 1,
\en
where
\be*
\alpha_n = \frac{\sqrt{2n+1}}{2(2n+1)\sqrt{2n+3}} & {\rm and } & 
\beta_n = \frac{\sqrt{2n+1}}{2(2n+1)\sqrt{2n-1}}.
\e*
\end{lem}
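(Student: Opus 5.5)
The three identities follow from two facts: orthonormality of $(L_n)$ in $L^2(0,1)$, and the classical derivative identity for Legendre polynomials. Throughout I write $P_n$ for the standard Legendre polynomials on $[-1,1]$ and $P_n^*(u)=P_n(2u-1)$ for the shifted ones. With this convention $L_n=\sqrt{2n+1}\,P_n^*$, which is consistent with $L_0\equiv 1$ and $L_1=\sqrt{3}(2u-1)$.

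For (\ref{lem:Legen_eq1}), I expand the monomial $u$ in the basis. Since $L_1=\sqrt{3}(2u-1)$, we have $2u-1=L_1/\sqrt{3}$, and hence
\[
u=\tfrac12 L_0(u)+\tfrac{1}{2\sqrt3}L_1(u).
\]
Orthonormality then gives $\int_0^1 uL_n(u)\,du=\tfrac12\delta_{n,0}+\tfrac{1}{2\sqrt3}\delta_{n,1}$, and $\tfrac{1}{2\sqrt3}=\tfrac{\sqrt3}{6}$.

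For (\ref{lem:Legen_eq2}), I exchange the order of integration by Fubini on the triangle $\{0\le s\le u\le 1\}$:
\[
\int_0^1\!\int_0^u L_n(s)\,ds\,du=\int_0^1(1-s)L_n(s)\,ds=\int_0^1 L_n(s)\,ds-\int_0^1 sL_n(s)\,ds .
\]
The first integral equals $\langle L_n,L_0\rangle=\delta_{n,0}$. The second is given by (\ref{lem:Legen_eq1}). Subtracting yields $\tfrac12\delta_{n,0}-\tfrac{\sqrt3}{6}\delta_{n,1}$. The formula is in fact valid for all $n\ge 0$, so the restriction $n\ge1$ in the statement is harmless.

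For (\ref{lem:Legen_eq3}), I start from the classical identity (see \cite{Szego})
\[
(2n+1)P_n(x)=P_{n+1}'(x)-P_{n-1}'(x),\qquad n\ge1 .
\]
Substituting $x=2u-1$ and using the chain rule $\tfrac{d}{du}P_m^*(u)=2P_m'(2u-1)$, this becomes
\[
(2n+1)P_n^*(u)=\tfrac12\,\tfrac{d}{du}\bigl(P_{n+1}^*(u)-P_{n-1}^*(u)\bigr).
\]
I then integrate from $0$ to $u$ and multiply by $\sqrt{2n+1}$, giving
\[
\int_0^u L_n(s)\,ds=\frac{\sqrt{2n+1}}{2(2n+1)}\Bigl[P^*_{n+1}-P^*_{n-1}\Bigr]_0^u .
\]
Finally, I substitute $P^*_{n+1}=L_{n+1}/\sqrt{2n+3}$ and $P^*_{n-1}=L_{n-1}/\sqrt{2n-1}$. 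This produces exactly the coefficients $\alpha_n$ and $\beta_n$, together with the boundary terms $-\alpha_nL_{n+1}(0)+\beta_nL_{n-1}(0)$.

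The only real care point is the normalization bookkeeping in this last step. One must track the factor $\tfrac12$ coming from the chain rule, and the different factors $\sqrt{2m+1}$ for $m=n,\,n\pm1$. Everything else is immediate from orthonormality.
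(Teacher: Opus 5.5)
Your proof is correct and takes essentially the same route as the paper. For the first identity you use orthonormality together with $u=\tfrac12 L_0+\tfrac{1}{2\sqrt3}L_1$. For the second you exchange the order of integration, which the paper phrases as integration by parts. For the third you integrate the Legendre derivative identity. The one cosmetic difference is that you derive the shifted orthonormal derivative identity from $(2n+1)P_n=P_{n+1}'-P_{n-1}'$ via the chain rule, whereas the paper quotes it directly in normalized form. Your remark that the second identity actually holds for all $n\ge 0$ is also correct.
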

\begin{proof} 
%
(\ref{lem:Legen_eq1}) follows from the basic following properties: 
 $L_0(u)=1$, $L_1(u)=\sqrt{3}(2u-1)$ and $\di \int_0^1 L_n(u)L_k(u)\,du=\delta_{nk}$ for all $n, k \geq 0.$ 
%
By integration by parts,
\be*
\int_0^1\left(\int_0^u L_n(s)\,ds\right)du
=\int_0^1 L_n(u)\,du - \int_0^1 uL_n(u)\,du.
\e*
and (\ref{lem:Legen_eq2}) follows.
For $n\ge1$, the shifted orthonormal Legendre polynomials satisfy the derivative identity (see \cite{Szego}) 
$$
\di \frac{2(2n+1)}{\sqrt{2n+1}}\,L_n(u) =
\frac{1}{\sqrt{2n+3}}\,L'_{n+1}(u) - \frac{1}{\sqrt{2n-1}}\,L'_{n-1}(u),
\qquad u\in(0,1).
$$
Integrating both sides from $0$ to $u$, we obtain
\[
\begin{split}
2\sqrt{2n+1}\int_0^u L_n(s)\,ds
={}&
\frac{L_{n+1}(u)-L_{n+1}(0)}{\sqrt{2n+3}}
\\
&
-\frac{L_{n-1}(u)-L_{n-1}(0)}{\sqrt{2n-1}}.
\end{split}
\]
which gives (\ref{lem:Legen_eq3}).
%
\end{proof}

\begin{proposition}
Assume that \(C\) is a  Lancaster copula whose standardized density admits the representation (\ref{cop_dens_stand}). 
Then
\[
\rho_C=\theta_{1,1}.
\]
Moreover,
\[
\tau_C
=
\frac{2}{3}\theta_{1,1}
+
4\sum_{n,k\ge 1}\theta_{n,k}
\Bigl(
\alpha_n\alpha_k\theta_{n+1,k+1}
-\alpha_n\beta_k\theta_{n+1,k-1}
-\beta_n\alpha_k\theta_{n-1,k+1}
+\beta_n\beta_k\theta_{n-1,k-1}
\Bigr),
\]
where coefficients involving an index \(0\) are understood to be equal to \(0\) in the last sum, and
\[
\alpha_n=
\frac{\sqrt{2n+1}}
{2(2n+1)\sqrt{2n+3}},
\qquad
\beta_n=
\frac{\sqrt{2n+1}}
{2(2n+1)\sqrt{2n-1}}.
\]
\end{proposition}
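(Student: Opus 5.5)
The plan is to substitute the standardized representations (\ref{cop_dens_stand})--(\ref{cop_stand}) into the integral formulas for $\rho_C$ and $\tau_C$ and reduce everything to the one-dimensional integrals of Lemma \ref{lem:Legendre}. Write $I_n(u)=\int_0^u L_n(s)\,ds$, so that $C(u,v)=uv+\sum_{n,k\ge1}\theta_{n,k}I_n(u)I_k(v)$.

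\emph{Spearman's rho.} Since $\int_0^1\int_0^1 uv\,du\,dv=1/4$, the independence part contributes $12\cdot\frac14-3=0$. Each remaining term factorizes as $\int_0^1 I_n(u)\,du\int_0^1 I_k(v)\,dv$. By (\ref{lem:Legen_eq2}), for $n\ge1$ this factor equals $-\frac{\sqrt3}{6}\delta_{n,1}$. Only $n=k=1$ survives, giving $12\,\theta_{1,1}\cdot\frac{3}{36}=\theta_{1,1}$. The exchange of sum and integral is justified because $C-uv=\langle g,\,c-1\rangle$ with $g$ an indicator, and the double integral of $C$ is a bounded linear functional of $c\in L^2$. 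So termwise integration is legitimate by $L^2$ convergence, exactly as in the proof of Proposition \ref{prop1}.

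\emph{Kendall's tau.} Use $\tau_C=4\int\!\!\int C\,c\,du\,dv-1$. Split $C=uv+R$ with $R=\sum\theta_{n,k}I_nI_k$, and $c=1+\sum\theta_{m,l}L_mL_l$. This gives four pieces.
\begin{itemize}
\item $\int\!\!\int uv=\frac14$.
\item $\int\!\!\int uv\,(c-1)=\sum\theta_{m,l}\int uL_m\int vL_l=\theta_{1,1}/12$, by (\ref{lem:Legen_eq1}).
\item $\int\!\!\int R=\theta_{1,1}/12$, as in the Spearman computation.
\item $\int\!\!\int R\,(c-1)=\sum_{n,k}\sum_{m,l}\theta_{n,k}\theta_{m,l}\int_0^1 I_nL_m\int_0^1 I_kL_l$.
\end{itemize}
For the last piece, (\ref{lem:Legen_eq3}) expresses $I_n$ as $\alpha_nL_{n+1}-\beta_nL_{n-1}$ plus a constant. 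The constant is orthogonal to $L_m$ for $m\ge1$. Hence $\int_0^1 I_nL_m=\alpha_n\delta_{m,n+1}-\beta_n\delta_{m,n-1}$, with the convention that the $L_0$ term drops since $m\ge1$. Expanding the product of the two Kronecker expressions yields exactly the four-term bracket in the statement. Collecting, $\tau_C=4\bigl(\frac14+\frac{\theta_{1,1}}{12}+\frac{\theta_{1,1}}{12}+S\bigr)-1=\frac23\theta_{1,1}+4S$.

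The main obstacle is justifying the double-series manipulation in $\int\!\!\int R\,(c-1)$. I would handle it by Parseval in $L^2([0,1]^2)$ with respect to the orthonormal basis $(L_mL_l)_{m,l\ge0}$. First, $c-1$ has coefficients $\theta_{m,l}$ with $\sum\theta_{m,l}^2<\infty$, since $c\in L^2$ (an assumption implicit in (\ref{cop_dens_stand})). Second, $R=C-uv$ is bounded, hence in $L^2$. Its $(m,l)$ coefficient is computed termwise via the relation above as $\sum_{n,k}\theta_{n,k}(\alpha_n\delta_{m,n+1}-\beta_n\delta_{m,n-1})(\alpha_k\delta_{l,k+1}-\beta_k\delta_{l,k-1})$. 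Each coefficient involves only finitely many $\theta$'s, and $\alpha_n,\beta_n=O(1/n)$, so the resulting sequence is square summable. The inner product $\langle R,c-1\rangle$ is then the absolutely convergent double sum $S$, which legitimizes the rearrangement.
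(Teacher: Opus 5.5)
Your proposal is correct and follows the paper's proof essentially step by step. It uses the same $I_n(u)=\int_0^u L_n(s)\,ds$, the same reduction of $\rho_C$ to $\int_0^1 I_n=-\frac{\sqrt3}{6}\delta_{n,1}$, and the same four-piece split of $\int\!\!\int C\,c$ (the paper's $T_1,\dots,T_4$) computed via Lemma~\ref{lem:Legendre}, including the fact that the constant part of $I_n$ is orthogonal to $L_m$ for $m\ge1$. The only addition is your $L^2$-continuity/Parseval justification for termwise integration and for rearranging the quadruple sum, which the paper leaves implicit.
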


\begin{proof}
We have 
\[
c(u,v)
=
1+\sum_{n,k\ge1}\theta_{n,k}L_n(u)L_k(v),
\]
and, after integration,
\[
C(u,v)
=
uv+\sum_{n,k\ge1}\theta_{n,k}I_n(u)I_k(v),
\]
where
\[
I_n(u)=\int_0^u L_n(s)\,ds.
\]

We first consider Spearman's rho. Using
\[
\rho_C=12\int_0^1\int_0^1 C(u,v)\,du\,dv-3,
\]
we obtain
\[
\rho_C
=
12\sum_{n,k\ge1}\theta_{n,k}
\left(\int_0^1 I_n(u)\,du\right)
\left(\int_0^1 I_k(v)\,dv\right).
\]
By Lemma 2,
\[
\int_0^1 I_n(u)\,du
=
-\frac{\sqrt3}{6}\delta_{n,1}.
\]
Therefore only the term \((n,k)=(1,1)\) contributes, and
\[
\rho_C
=
12\,\theta_{1,1}
\left(-\frac{\sqrt3}{6}\right)^2
=
\theta_{1,1}.
\]

We now turn to Kendall's tau:
\[
\tau_C
=
4\int_0^1\int_0^1 C(u,v)c(u,v)\,du\,dv-1.
\]
Using the above expressions of \(C\) and \(c\), we write
\[
\frac{\tau_C+1}{4}
=
T_1+T_2+T_3+T_4,
\]
where
\[
T_1=
\int_0^1\int_0^1 uv\,du\,dv
=
\frac14,
\]
\[
T_2=
\sum_{p,q\ge1}\theta_{p,q}
\left(\int_0^1 uL_p(u)\,du\right)
\left(\int_0^1 vL_q(v)\,dv\right),
\]
\[
T_3=
\sum_{n,k\ge1}\theta_{n,k}
\left(\int_0^1 I_n(u)\,du\right)
\left(\int_0^1 I_k(v)\,dv\right),
\]
and
\[
T_4=
\sum_{n,k,p,q\ge1}
\theta_{n,k}\theta_{p,q}
\left(\int_0^1 L_p(u)I_n(u)\,du\right)
\left(\int_0^1 L_q(v)I_k(v)\,dv\right).
\]

By Lemma 2,
\[
\int_0^1 uL_p(u)\,du
=
\frac{\sqrt3}{6}\delta_{p,1},
\]
and
\[
\int_0^1 I_n(u)\,du
=
-\frac{\sqrt3}{6}\delta_{n,1}.
\]
Hence
\[
T_2=T_3=\frac{1}{12}\theta_{1,1}.
\]

It remains to compute \(T_4\). From Lemma 2,
\[
I_n(u)
=
\alpha_nL_{n+1}(u)-\beta_nL_{n-1}(u)
-\alpha_nL_{n+1}(0)+\beta_nL_{n-1}(0).
\]
Since \(p\ge1\), the constant terms vanish after integration against \(L_p\), and therefore
\[
\int_0^1 L_p(u)I_n(u)\,du
=
\alpha_n\delta_{p,n+1}
-
\beta_n\delta_{p,n-1}.
\]
Similarly,
\[
\int_0^1 L_q(v)I_k(v)\,dv
=
\alpha_k\delta_{q,k+1}
-
\beta_k\delta_{q,k-1}.
\]
Injecting these identities into \(T_4\) gives
\[
T_4
=
\sum_{n,k\ge1}\theta_{n,k}
\Bigl(
\alpha_n\alpha_k\theta_{n+1,k+1}
-\alpha_n\beta_k\theta_{n+1,k-1}
-\beta_n\alpha_k\theta_{n-1,k+1}
+\beta_n\beta_k\theta_{n-1,k-1}
\Bigr),
\]
where coefficients with an index \(0\) are taken equal to \(0\) in this expression.

Combining the four terms,
\[
\tau_C
=
4\left(
\frac14+\frac{1}{12}\theta_{1,1}
+\frac{1}{12}\theta_{1,1}
+T_4
\right)-1,
\]
we obtain 
\[
\tau_C
=
\frac{2}{3}\theta_{1,1}
+
4T_4.
\]
This proves the result.
\end{proof}

\subsection{Tail dependence}
Tail dependence is a measure of strength of dependence in the joint lower or joint upper tail of a multivariate distribution. 
In the bivariate case, the concept of tail dependence consists of the amount of dependence in the upper-quadrant tail 
or the lower-quadrant tail of a bivariate distribution (see, e.g., Joe \cite{Joe2014} for  details).
The upper tail dependence coefficient  of a pair of random variables $(X,Y)$ associated with a bivariate copula
$C(u,v)$  is defined by
\be*
\lambda_{U}= \lim_{u\to 1^{-}} \P\Big(Y>F_{Y}^{-1}(u)|X>F_{X}^{-1}(u)\Big)=\lim_{u\to 1^{-}}\frac{ 1-2u+C(u,u)}{1-u}.  
\e*
Analogously, the lower tail dependence coefficient  of $(X,Y)$ is 
\be*
\lambda_{L}=\lim_{u\to 0^{+}}\P\Big(Y\leq F_{Y}^{-1}(u)|X\leq F_{X}^{-1}(u)\Big)=\lim_{u\to 0^{+}}\frac{C(u,u)}{u}. 
\e*
Note that $\lambda_{U}$ and $\lambda_{L}$ are exclusively determined by $C(u,v)$ and do not depend on the marginal distributions. 

\begin{proposition}
Assume that Assumptions (A1)--(A2) hold. Let \(C\) be a Lancaster copula such that its standardized coefficients satisfy
\[
\sum_{n,k\ge1}
|\theta_{n,k}|
\sqrt{(2n+1)(2k+1)}
<\infty.
\tag{20}
\]
Then the upper and lower tail dependence coefficients satisfy
\[
\lambda_U=\lambda_L=0.
\]
\end{proposition}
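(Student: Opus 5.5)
The plan is to work directly with the standardized representation (\ref{cop_stand}),
\[
C(u,v)=uv+\sum_{n,k\ge1}\theta_{n,k}I_n(u)I_k(v),\qquad I_n(u)=\int_0^u L_n(s)\,ds,
\]
and show that the perturbation term divided by $u$ (resp. $1-u$) vanishes at the corners. Condition (20) makes the series absolutely and uniformly convergent, because $|I_n(u)|\le u\|L_n\|_\infty\le u\sqrt{2n+1}$. Here we use the bound $\|L_n\|_\infty\le\sqrt{2n+1}$ already invoked in the Corollary of the positivity section. This also justifies the termwise integration of (\ref{cop_dens_stand}) into (\ref{cop_stand}).

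For the lower tail, write $C(u,u)/u=u+\sum_{n,k\ge1}\theta_{n,k}\,I_n(u)I_k(u)/u$. Each term is bounded by $|\theta_{n,k}|\,u\sqrt{(2n+1)(2k+1)}$, since $|I_n(u)I_k(u)|\le u^2\sqrt{(2n+1)(2k+1)}$. Summing with (20) gives
\[
\left|\frac{C(u,u)}{u}\right|\le u\Bigl(1+\sum_{n,k\ge1}|\theta_{n,k}|\sqrt{(2n+1)(2k+1)}\Bigr)\longrightarrow0,\qquad u\to0^+,
\]
so $\lambda_L=0$.

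For the upper tail, note that $\int_0^1L_n=0$ for $n\ge1$ by orthogonality to $L_0$. Hence $I_n(u)=-\int_u^1L_n(s)\,ds$ and $|I_n(u)|\le(1-u)\sqrt{2n+1}$. Also $1-2u+u^2=(1-u)^2$, so
\[
\frac{1-2u+C(u,u)}{1-u}=(1-u)+\sum_{n,k\ge1}\theta_{n,k}\frac{I_n(u)I_k(u)}{1-u}.
\]
The absolute value of the right-hand side is at most $(1-u)\bigl(1+\sum|\theta_{n,k}|\sqrt{(2n+1)(2k+1)}\bigr)\to0$ as $u\to1^-$, giving $\lambda_U=0$.

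The only delicate point is justifying the series manipulations: representation (\ref{cop_stand}) is a priori only an $L^2$ statement. I would handle it as in Proposition \ref{prop1}, by taking the inner product of the indicator with the $L^2$-convergent density series, which yields pointwise equality. Condition (20) then upgrades this to absolute convergence, which is all the bounds above require. Note also that (\ref{cop_stand}) writes $\mu(dx)$ where Lebesgue measure on $[0,1]$ is meant; I would read it that way.
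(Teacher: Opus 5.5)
Your proof is correct and is essentially the paper's argument. The paper bounds the density by $M=1+\sum_{n,k\ge1}|\theta_{n,k}|\sqrt{(2n+1)(2k+1)}$ using $\|L_n\|_\infty\le\sqrt{2n+1}$, then integrates over $[0,u]^2$ and $[u,1]^2$. You instead apply the same sup-norm bound termwise to $I_n(u)$, using $\int_0^1 L_n=0$ for the upper corner, and obtain the identical estimates $Mu$ and $M(1-u)$. Your remarks on the pointwise validity of the standardized series and on reading $\mu(dx)$ in (\ref{cop_stand}) as Lebesgue measure are correct, and they supply details the paper leaves implicit.
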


\begin{proof}
We first show that \(c\) is bounded on \([0,1]^2\). Recall that the orthonormal shifted Legendre polynomials satisfy
\[
\|L_n\|_\infty^2\le 2n+1,
\qquad n\ge0.
\]
Hence,
\[
|L_n(u)L_k(v)|
\le
\sqrt{(2n+1)(2k+1)},
\qquad (u,v)\in[0,1]^2.
\]
Using the standardized expansion
\[
c(u,v)
=
1+\sum_{n,k\ge1}\theta_{n,k}L_n(u)L_k(v),
\]
we obtain
\[
|c(u,v)|
\le
1+
\sum_{n,k\ge1}
|\theta_{n,k}|
\sqrt{(2n+1)(2k+1)}.
\]
Assumption (20) therefore implies that \(c\) is bounded on \([0,1]^2\). More precisely,
\[
\|c\|_\infty
\le
1+
\sum_{n,k\ge1}
|\theta_{n,k}|
\sqrt{(2n+1)(2k+1)}
=:M<\infty.
\]

We now study the lower tail dependence coefficient. Since
\[
C(u,u)
=
\int_0^u\int_0^u c(s,t)\,ds\,dt,
\]
we have
\[
0\le C(u,u)
\le
M u^2.
\]
Therefore,
\[
0\le
\frac{C(u,u)}{u}
\le
Mu
\underset{u\to0^+}{\longrightarrow}0.
\]
Hence
\[
\lambda_L=0.
\]

For the upper tail dependence coefficient, we note that
\[
1-2u+C(u,u)
=
\mathbb P(U>u,V>u)
=
\int_u^1\int_u^1 c(s,t)\,ds\,dt.
\]
Thus,
\[
0\le
1-2u+C(u,u)
\le
M(1-u)^2.
\]
Dividing by \(1-u\), we obtain
\[
0\le
\frac{1-2u+C(u,u)}{1-u}
\le
M(1-u)
\underset{u\to1^-}{\longrightarrow}0.
\]
Consequently,
\[
\lambda_U=0.
\]
\end{proof}


\subsection{Interpretation} 
The results on tail dependence  mean that Lancaster copulas do not exhibit asymptotic dependence in the tails of the distribution. As for Gaussian or Frank copulas, Lancaster copulas model moderate or weak dependencies but fail to capture extreme co-movements. 
It implies that Lancaster  margins are not asymptotically dependent, i.e., if one variable takes an extreme value, the probability that the other variable also takes an extreme value tends to zero.


This limitation is shared by other widely used copula families based on smooth densities,
such as Gaussian or Frank copulas.
Lancaster copulas are therefore primarily intended for modeling moderate or weak dependence
structures, where tail asymptotic independence is acceptable, for instance in reliability,
biometrics, or dependence modeling driven by central rather than extreme behavior.



Concerning Spearman and Kendall coefficients, the 
 Spearman's rho depends exclusively on the first-order
coefficient and is therefore insensitive to higher-order components of the
copula expansion. Kendall's tau incorporates additional interaction terms
involving successive coefficients $\theta_{n\pm 1,k\pm 1}$. These terms are
multiplied by weights that decay quadratically with $n$ and $k$, implying that
higher-order contributions have a limited impact provided the sequence 
$(\theta_{n,k})$ decays sufficiently fast. This shows that in the standardized Lancaster family, the first coefficient \(\theta_{1,1}\) fully controls the dominant rank correlation, while the higher-order coefficients shape finer features of the copula without significantly affecting \(\rho_C\) or \(\tau_C\).

\section{Multivariate extensions}\label{secmulti}

To extend the definition of the Lancaster copula to the multivariate case, we introduce the following multivariate notation. Given $x = (x_1,\cdots,x_d)\in \R^d$ and 
$n = (n_1,\cdots,n_d)\in \N^d$, we write $x^n:=x_{1}^{n_1}\times \cdots \times x_{d}^{n_d}$  and we call the integer
$\vert n\vert := n_1+\cdots+n_d$ the order of $n$. A $\vert n\vert$-th degree polynomial, say $P_n$, has
the following form
\be* 
\di P_n(x)=\sum_{\substack{k\in \N^d\\ \vert k\vert\leq \vert n\vert}}\alpha_k x^k,
\e*
where at least one coefficient $\alpha_k$ is nonzero for
$\vert k\vert = \vert n\vert$. 
Let 
${X}$ and $Y$ be two random variables on $\R^d$ 
with probability distributions $\mu$ and $\nu$, respectively.
Let $(P_n)_{n\in \N^d}$ and $(Q_n)_{n\in \N^d}$ be two bases of orthonormal polynomials with respect to the distributions of ${X}$ and $Y$, respectively. 
Here and subsequently, $P_k$ and $Q_n$ denote polynomials of the $\vert k\vert$th and $\vert n\vert$th degrees.

The joint distribution $\sigma$ is called a Lancaster
probability if 
\be* 
\di \E\left(P_k({X})Q_n(Y)\right) =\rho_n \delta_{k,n}. 
\e*
The sequence $(\rho_n)_{n\in\N^d}$  is called a Lancaster sequence and we assume that 
$\sum_{n\in \N^d} \rho_n^2 < \infty$. 
Then we can write
\beg\label{poly_lancaster}
\di \sigma(dx,dy) = \sum_{n\in \N^d}\rho_n P_n(x)Q_n(y)\mu(dx)\nu(dy).
\en  
We assume that $\mu$, $\nu$ and $\sigma$  are
absolutely continuous with respect to the Lebesgue measure and we denote their corresponding probability density functions by $f_{X}$,$f_{Y}$ and $f_{X,Y}$ respectively.
This yields the expansion
\beg \label{dens_lancaster2}
\di f_{{X},Y}(x,y)=f_{{X}}(x)f_{Y}(y) \sum_{n\in \N^d}\rho_n P_n(x)Q_n(y).
\en 

Assume that $f_X>0$ and $f_Y>0$, identifying (\ref{cop1_chap5}) and (\ref{dens_lancaster2}), we obtain
the relation
\be*
\label{cop0_chap5}
c( u,  v) & = & \di \sum_{ n\in\N^d}
\rho_{ n}
P_n\left(F_{{X}}^{-1}(u)\right)Q_n\left(F_{Y}^{-1}( v)\right),
\e*
where $ u=(u_1,\cdots,u_d),  v=(v_1,\cdots,v_d) \in [0,1]^d$, 
$ F_{X}^{-1}(x)= (F_{X_1}^{-1}(x_1),\cdots,F_{X_d}^{-1}( x_d))$, and $ F_{Y}^{-1}(y)= (F_{Y_1}^{-1}(y_1),\cdots,F_{Y_d}^{-1}( y_d))$.

We deduce an expression of the copula $C$ associated with $(X, Y)$, as follows
\be*
C( u,  v) & = & \di \sum_{n\in\N^d}
\rho_{n}\left(\int_{-\infty}^{F_{X_1}^{-1}(u_1)}\cdots
\int_{-\infty}^{F_{X_d}^{-1}(u_d)}P_n(x)\mu(dx)\right)\left(\int_{-\infty}^{F_{Y_1}^{-1}(v_1)}\cdots\int_{-\infty}^{F_{Y_d}^{-1}(v_d)} Q_n(y)\nu(dy)\right).
\e*

\subsection{Multivariate tail dependence}   
We  define multivariate lower tail dependence measures as suggested in \cite{Joe2014} by:
\be* 
\di \lambda_{L}(C)=\lim_{u \to 0^{+}}\frac{C(u\mathbf{1}_{d})}{u}, \  ~\text{with } u\mathbf{1}_{d}:=(u,\cdots,u).
\e*

Write (A1')-(A2') the multivariate versions of (A1)-(A2).   
\begin{prop}\label{upper tail multi}
Assume that (A1')-(A2') hold. Let $C$ be a  multivariate Lancaster copula such that 
\be*
\sum_{n,k \in\mathbb{N}^d\setminus\{0\}}
|\theta_{ n,k}|\,
\prod_{j=1}^d \sqrt{2n_j+1}\sqrt{2k_j+1}
<\infty ,
\e*
where $\theta_{n,k}$ are the coefficients associated to the multivariate standardized copula. 
Then the multivariate lower tail dependence measure satisfies
\be*
\di \lambda_{L}(C)=0.
\e*
\end{prop}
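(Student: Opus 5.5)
We follow the bivariate argument given above for $\lambda_L=\lambda_U=0$: first show that the copula density is bounded, then integrate that bound over a small box at the origin.

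\textbf{Step 1: standardized expansion.} The multivariate copula $C$ of the $2d$-dimensional vector $(X,Y)$ is a distribution function on $[0,1]^{2d}$ with uniform margins. By the same invariance argument as in the standardized Lancaster copula subsection, its density admits the expansion
\[
c(u,v)=1+\sum_{n,k\in\N^d\setminus\{0\}}\theta_{n,k}\,\prod_{j=1}^d L_{n_j}(u_j)L_{k_j}(v_j).
\]
Here the products of shifted Legendre polynomials form an orthonormal basis of $L^2([0,1]^{2d})$.

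\textbf{Step 2: boundedness of $c$.} Since $\|L_m\|_\infty^2\le 2m+1$ (as in the proof of the Corollary above), each term satisfies
\[
\Big|\prod_j L_{n_j}(u_j)L_{k_j}(v_j)\Big|\le \prod_j\sqrt{2n_j+1}\sqrt{2k_j+1}.
\]
The summability hypothesis then makes the series converge absolutely and uniformly on $[0,1]^{2d}$. It follows that
\[
\|c\|_\infty\le M:=1+\sum_{n,k\in\N^d\setminus\{0\}}|\theta_{n,k}|\prod_{j=1}^d\sqrt{(2n_j+1)(2k_j+1)}<\infty.
\]
Uniform convergence also identifies this series with the actual copula density almost everywhere, because it coincides with the $L^2$ expansion.

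\textbf{Step 3: small-box bound.} Let $D=2d\ge 2$ be the total dimension, and write $C(u\mathbf{1}_D)$ for the copula evaluated on the diagonal. Then
\[
0\le C(u\mathbf{1}_D)=\int_{[0,u]^D}c\le M u^{D}.
\]
Therefore
\[
0\le \frac{C(u\mathbf{1}_D)}{u}\le M u^{D-1}\to 0 \quad\text{as } u\to 0^+,
\]
so $\lambda_L(C)=0$. If the statement is read with $C$ the copula of a single $d$-vector, the same bound holds with $D=d\ge2$. It would fail for $D=1$, but the copula of $(X,Y)$ always has $D\ge 2$.

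\textbf{Main obstacle.} The delicate point is Step 1 combined with the identification in Step 2. We must be sure the standardized multivariate expansion is the genuine density, which requires square-integrability of $c$ on $[0,1]^{2d}$. Under (A1')--(A2') this follows from the change of variables: by (A2'), $c(u,v)=f_{X,Y}(x,y)/\big(f_X(x)f_Y(y)\big)$ with $x=F_X^{-1}(u)$, $y=F_Y^{-1}(v)$ taken componentwise. Hence
\[
\int c^2\,du\,dv=\int\Big(\frac{f_{X,Y}}{f_Xf_Y}\Big)^2\,d\Big(\prod_j\mu_j\Big)d\Big(\prod_j\nu_j\Big).
\]
This coincides with $\sum_n\rho_n^2<\infty$ when $\mu,\nu$ have independent components. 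In general, I would simply take the standardized expansion, with its coefficients $\theta_{n,k}$, as the given object, as the statement implicitly does, and rely on the absolute summability hypothesis, which alone yields a bounded continuous function equal to $c$.
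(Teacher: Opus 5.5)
Your proof is correct and follows the paper's argument: you bound $\|c\|_\infty$ by $M$ using $\|L_m\|_\infty\le\sqrt{2m+1}$ and the summability hypothesis, then use $0\le C(u\mathbf{1})\le Mu^{D}$ and divide by $u$. Your version is slightly cleaner than the paper's: the paper integrates over $[0,u]^d$ and needs $d\ge 2$, whereas you work with the $D=2d$-dimensional copula of $(X,Y)$, and your identification step (uniform limit equals $c$ a.e.) needs no $L^2$ assumption, because the difference lies in $L^1([0,1]^D)$ with all Legendre coefficients zero and polynomials are dense in $C([0,1]^D)$.
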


\begin{proof}
For any $ u\in[0,1]^d$,
\[
\left|\prod_{j=1}^d L_{n_j}(u_j)\right|
\le \prod_{j=1}^d \|L_{n_j}\|_\infty
\le \prod_{j=1}^d \sqrt{2n_j+1},
\]
and by the triangle inequality we get
\[
|c( u)|
\le 1+\sum_{ n,k\neq \bm 0}|\theta_{ n,k}|
\prod_{j=1}^d \|L_{n_j}\|_\infty \|L_{k_j}\|_\infty
\le 1+\sum_{ n, k \neq \bm 0}|\theta_{ n,k}|
\prod_{j=1}^d \sqrt{2n_j+1}\sqrt{2k_j+1}
<\infty,
\]
hence $c\in L^\infty([0,1]^d)$; denote $M:=\|c\|_\infty<\infty$.

Since $C$ has density $c$,
\[
C(u\bm 1_d)=\int_{[0,u]^d} c( s)\,d s.
\]
Thus, for $u\in(0,1)$,
\[
0\le C(u\bm 1_d)\le \int_{[0,u]^d} \|c\|_\infty\,d s
=M\,u^d.
\]
Dividing by $u>0$ gives
\[
0\le \frac{C(u\bm 1_d)}{u}\le M\,u^{d-1}\xrightarrow[u\to 0+]{}0
\qquad\text{(since $d\ge 2$),}
\]
which proves $\lambda_L(C)=0$. 
\end{proof}


\subsection{Gaussian illustration} 
Assume that $X=(X_1,X_2)$  and  $ X'=(X_1',X_2')$ are  two independent  Gaussian vectors with mean $m$ and variance $\Sigma$ (resp. $m'$ and $\Sigma'$). Write 
$ X'' =  X +  X'$. Then from the multidimensional version of Theorem \ref{theoLancaster} (see Theorem 3 in  \cite{Koudou2000})  the distribution of $( X,  X'')$ is Lancaster with density: 
\begin{align*}
    f_{ X,  X''}(( x ,  x'')) & = 
    f_{ X}( x) f_{ X''}(x'') 
    \di\sum_{(n_1,n_2)\in \N^2} 
    \rho_{n_1,n_2} 
    H_{n_1}(x_1)H_{n_2}(x_2)H_{n_1}(x''_1)H_{n_2}(x''_2), 
\end{align*}
with 
\begin{align*}
    \rho_{n_1,n_2} &= 
    \E( H_{n_1}(X_1)H_{n_2}(X_2)H_{n_1}(X''_1)H_{n_2}(X''_2)),
\end{align*}
where $H_n$ are orthonormal Hermite polynomials. In the case where $\Sigma$ (resp. $\Sigma'$) is diagonal, that is, when $X_1$ and $X_2$ (resp. $X_1'$ and $X_2'$)  are independent, we obtain directly: 
\begin{align*}
    \rho_{n_1,n_2} &= 
    \E( H_{n_1}(X_1)H_{n_1}(X_1''))\E(H_{n_2}(X_2)H_{n_2}(X''_2))
    \\
    & = 
    r_1^{n_1} r_2^{n_2}, 
\end{align*}
where $r_i={\rm  cor}(X_i,X_i'')$, $i=1,2$. In the particular case where 
$$\Sigma = \Sigma' = \left( \begin{array}{cc} 1 & 0 
\\ 0 & 1
\end{array}
\right) $$  we have $r_1=r_2=1/\sqrt{2}$  
and we deduce 
\be*
\label{cop0_chap5}
c(u, v) & = & \di \sum_{ (n_1,n_2)\in \N^2}
\alpha_{n_1,n_2}
H_{ n_1}(F_{X_1}^{-1}(u_1))H_{ n_1}(F_{X''_1}^{-1}(v_1))H_{ n_2}(F_{X_2}^{-1}(u_2))H_{ n_2}(F_{X_2''}^{-1}(v_2)),
\e*
with $\alpha_{n_1,n_2}=(2^{(n1+n2)/2})^{-1}$.  
\section{Discussion}
\label{discussion} 
We introduce \textit{Lancaster copulas}, a new class of copulas built from bi-orthogonal expansions of Lancaster probabilities. Under assumptions (A1)-(A2) satisfied by all our illustrations, we derive explicit infinite-series representations for copulas and copula densities and study practical truncations. Numerically, low-order truncations already provide accurate approximations across several canonical families (exponential/Downton, Gaussian, gamma, and beta), which we attribute to the fast decay of the expansion coefficients, and  to non-extreme tails that do not require high expansion coefficients $\rho_n$. We further analyze tail behavior and show that Lancaster copulas exhibit no tail dependence. An extension to the multivariate setting is outlined. 

Through the invariance property, Lancaster copulas enable us to understand the dependence structure of a new family of distributions, which we call {\it generalized Lancaster distributions}, that generalize the class of continuous Lancaster distributions by replacing the bi-orthogonal polynomials with families of bi-orthogonal functions, while preserving the same copula. 

Finally, the good approximation results demonstrated numerically pave the way for the estimation of Lancaster copulas. 
More precisely, if we  observe  continuous iid observations
$(X_1,Y_1),\cdots, (X_m,Y_m)$,  from a  Lancaster distribution with margins $\mu$ and $\nu$, we can estimate the coefficient  $\rho_{n}$ by
\be*
\di \w{\rho}_{n} & = & \frac{1}{m}\di\sum_{i=1}^m P_{n}(X_i)Q_{n}(Y_i).
\e*
Therefore, a $N$-th order non-parametric estimator of the Lancaster copula density $c$ is given by
\be*
\di \w{c}^{[ N]}(u,v) = \sum_{n=0}^{N}\w \rho_{n}P_{n}(\w F_{X}^{-1}(u))Q_{n}(\w F_{Y}^{-1}(v)),
\e*
where $\w F^{-1}$ is the empirical quantile estimator.
By integration, we get a non-parametric  estimator of the Lancaster copula function  as 
\be*
\di \w{C}^{[N]}(u,v) =\sum_{n=0}^{N}\w\rho_{n}\int_{-\infty}^{\w F_{X}^{-1}(u)}P_{n}(x)\mu(dx)\int_{-\infty}^{\w F_{Y}^{-1}(v)}Q_{n}(y)\nu(dy).
\e*
We propose to use a data-driven procedure based on the Least-Squares
Cross-Validation (LSCV) to select the optimal truncation parameter
$\di \w{N}_{opt}$. We determine $\di \w{N}_{opt}$ by minimizing the 
LSCV criterion, a cross-validation technique introduced by \cite{rudemo} 
and \cite{bowman} for kernel density estimation, and subsequently adapted 
to orthogonal series expansions by \cite{taylor, Ngounou2023}. 
The criterion is given by
\be* \label{cross-validation}
\di LSCV(N)= \int_{I^{2}} \left(\w{c}^{[ N]}(u,v)\right)^2 dudv -\di \frac{2}{m}\di \sum_{i=1}^{m}\w{c}^{[N]}_{-i}\left(F_{X}(X_{i}), F_{Y}(Y_{i})\right), 
\e*
with its empirical counterpart
\be* \label{cross-validation-empirical}
\di \w{LSCV}(N)= \int_{I^{2}} \left(\w{c}^{[ N]}(u,v)\right)^2 dudv -\di \frac{2}{m}\di \sum_{i=1}^{m}\w{c}^{[N]}_{-i}\left(\w{F}_{X}(X_{i}), \w{F}_{Y}(Y_{i})\right), 
\e* 
defining the estimator
\be*
\di \w{N}_{opt} =\argmin_{N\in \N} \w{LSCV}(N),
\e* 
where  $\di \w{c}^{[ N]}_{-i}$ denotes the Lancaster estimator computed
without observation  $(X_{i},Y_{i})$.

This would allow for the estimation of Lancaster copulas and their densities, and for the derivation of asymptotic properties, similar to the work conducted in \cite{Ngounou2023}.
Moreover, it suggests the possibility of a test to determine whether a vector admits a Lancaster copula.

\section*{Acknowledgments}
D. Pommeret is a member of the ANR DREAMES project. He also gratefully acknowledges the support of the ACTIONS Chair, under the aegis of BNP Paribas Cardif, in collaboration with the Institut des Actuaires and the Fondation du Risque.

\end{document}